\documentclass{article}

\usepackage{microtype}
\usepackage{graphicx}
\usepackage{subcaption}
\usepackage{booktabs} 

\usepackage{hyperref}

\usepackage[preprint]{icml2026}

\usepackage{amsmath}
\usepackage{amssymb}
\usepackage{mathtools}
\usepackage{amsthm}

\usepackage[capitalize,noabbrev]{cleveref}

\theoremstyle{plain}
\newtheorem{theorem}{Theorem}[section]

\newtheorem{lemma}[theorem]{Lemma}

\theoremstyle{definition}
\newtheorem{definition}[theorem]{Definition}

\theoremstyle{remark}

\newcommand{\V}{\mathcal{V}}

\newcommand{\con}{\boldsymbol{c}}
\newcommand{\pa}{\text{par}}
\newcommand{\ptokenres}{p_{\text{res}}^{\text{token}}}
\newcommand{\pblockres}{p_{\text{res}}^{\text{block}}}

\usepackage{booktabs}
\usepackage{multirow}

\usepackage[textsize=tiny]{todonotes}

\icmltitlerunning{Greedy Multi-Path Block Verification for Faster Decoding in Speculative Sampling}

\begin{document}

\twocolumn[
  \icmltitle{Greedy Multi-Path Block Verification for Faster Decoding in Speculative Sampling}



  \icmlsetsymbol{equal}{*}

  \begin{icmlauthorlist}
    \icmlauthor{Rahul Thomas}{rit,col}
    \icmlauthor{Arka Pal}{rit}
  \end{icmlauthorlist}

  \icmlaffiliation{rit}{Ritual}
  \icmlaffiliation{col}{Columbia University}

  \icmlcorrespondingauthor{Rahul Thomas}{rahulkrishnathomas@gmail.com}

  \icmlkeywords{Machine Learning, ICML}

  \vskip 0.3in
]



\printAffiliationsAndNotice{}  

\begin{abstract}
The goal of $L$-step speculative decoding is to accelerate autoregressive decoding of a target model by using a cheaper draft model to generate a candidate path of $L$ tokens. Based on a verification algorithm involving target and draft model probabilities, a prefix of the candidate sequence is accepted, and an additional correction token is sampled from a residual distribution to ensure that the final output adheres to the target distribution. While standard speculative decoding uses a verification algorithm which is independent at each token on the path, a recent extension called block verification uses a joint condition involving all sampled on-path probabilities. Block verification (BV) was shown to be optimal over all verification algorithms which use only on-path probabilities, improving on standard speculative decoding. In this work, we first show that block verification is optimal even over verification algorithms that use off-path probabilities, by constructing an information-agnostic linear program (LP). Further, we can extend our LP to the setting where the draft model samples multiple candidate paths, and use it to construct a natural class of multi-path block verification generalizations. While computing the optimal algorithm in this class is not tractable, by considering a stricter class of greedy algorithms, we can formulate an efficient method called greedy multi-path block verification (GBV). Empirically, GBV can improve block efficiency by over 30\% and reduce decoding walltimes by over 15\% relative to BV. On Llama-3 70B, GBV can improve the end-to-end decoding throughput over SOTA multi-path verification methods by more than 15\%.
\end{abstract}

\section{Introduction}

Large language models (LLMs) achieve strong results across code, language, and reasoning \citep{zhu2024multilingualmachinetranslationlarge, kasneci2023chatgptgoodeducation, thirunavukarasu2023llms_medicine}. Most LLM families, such as Qwen \citep{bai2023qwen}, GPT \citep{radford2018improving,radford2019language,brown2020language,openai2023gpt4}, and Llama \citep{touvron2023llama,touvron2023llama2} employ autoregressive decoding. When inference is performed with these transformer-based architectures on GPUs, end-to-end latency is dominated by memory bandwidth rather than compute \citep{fu2024break}.

Speculative sampling \citep{chen2023accelerating,leviathan2023fast} aims to reduce such punitive costs when sampling from a large \emph{target model}. This procedure autoregressively decodes a candidate sequence from a cheaper \emph{draft model}, performs a target model forward pass over the candidate sequence to obtain target distribution values, and then alters the sequence through a randomized \emph{verification algorithm} so that the resulting output matches the target model distribution. As inference is bandwidth-bound, the target forward pass over the candidate sequence has negligible overhead over a standard forward pass. Thus, speculative sampling decodes multiple tokens per target model call and speeds up inference without impacting downstream performance.

In \textbf{standard speculative sampling}, the draft model proposes a length $L$ draft block, and the verification algorithm independently accepts or rejects each token based on target and draft model distributions values along the bock. The longest prefix with no rejections is chosen, and an additional token is sampled from a residual distribution, ensuring that at least one token is always generated. While this procedure significantly improves decoding efficiency, it suffers when early tokens have low acceptance. If the first token is almost always rejected due to a poor draft suggestion, then even if subsequent tokens are always accepted, there is no speedup.

To overcome this early-token bottleneck, recent work on \textbf{block verification (BV)} \citep{sun2024block} replaces independent token-wise verification with independent prefix-wise verification, and selects the longest accepted prefix. An additional token is still sampled, but the residual distribution is altered to ensure the output still matches the target distribution. This ensures that more tokens can be accepted when earlier tokens are likely to be rejected. Block verification is empirically 5-8\% faster than standard speculative sampling. In fact, it is provably optimal over all verification algorithms that only use target and draft probability distributions along the draft block, including standard speculative sampling. 

In this paper, we frame the optimal verification algorithm as a solution to a linear program (LP), and extend the LP to the setting of \emph{multiple} i.i.d. generated draft sequences. Our contributions are as follows.
\begin{itemize}
    \item \textbf{Single-Path Information-Agnostic LP.} We develop an LP that encodes feasible speedups from verification algorithms that must still return a candidate sequence prefix and an additional token, but are now given access to the \emph{full joint} target and draft distributions, not just on-path values. We show that block verification is still optimal in this setting. That is, the prefix output requirement is the bottleneck to improving optimal decoding efficiency, not off-path information access.
    \item \textbf{Multi-Path Information-Agnostic LP.} We extend the single-path LP to the setting where the draft model i.i.d. samples $K>1$ candidate sequences, considering verification algorithms which return a prefix of one of the paths and an additional token. Unlike the $K=1$ setting, we find that knowledge of off-path target and draft distributions can improve decoding efficiency. We show that the optimal verification algorithm in this setting is to probabilistically select one of the $K$ paths, and then run block verification on that single path with a skewed draft distribution. 
    \item \textbf{Greedy Approximation Schemes.} The optimal solution to the multi-path LP involves an infeasible nonlinear optimization problem over the joint target and draft distributions. Thus, we explore a class of greedy approximations: algorithms which globally rank all possible candidate sequences, select the highest-ranking of the $K$ paths, and run block verification on that path. We show that these can be viewed as outputs of a greedy polymatroid algorithm in the multi-path LP.
    \item \textbf{Greedy Multi-Path Block Verification (GBV).} Many greedy approximation schemes require the full joint target and draft distributions. However, when the global ranking is tree-based, only on-path values are required. We devise a simple tree-based rule that results in wall-clock speedups of over 15\% relative to BV, and provide theoretical justification for this rule.
\end{itemize}

\section{Background}
\label{sec:background}

We first review standard speculative sampling and its extensions. We start with the case where only one draft block is generated, and then cover multi-path extensions.

\subsection{Single-Path}
Denote the target model by $M_p$, and the draft model by $M_q$, which are assumed to share a common vocabulary $\V$. We use the sequence notation $a_{1:k}=(a_1,\ldots,a_k)$ for tokens $a_1,\ldots,a_k \in \V$ and the inclusive slicing notation $a_{i:j} = (a_i,\ldots,a_j)$, with the empty sequence convention $a_{i,j}=\emptyset$ for $j<i$. Given a context $\con$, the target and draft models induce next-token distribution $p(\cdot|\con)$ and $q(\cdot|\con)$, respectively. In autoregressive sampling, these further induce next-$k$-token distributions through conditional factorization:
\begin{align}
    q_k(a_{1:k}|\con) &= \prod_{i=1}^k q(a_i|\con,a_{1:i-1}),\\ p_k(a_{1:k}|\con) &= \prod_{i=1}^k p(a_i|\con,a_{1:i-1}).
\end{align}
In $L$-step speculative sampling, we autoregressively sample a draft block of $L$ tokens from the draft distribution $q_L$ and call the target model $M_p$ on this block \emph{once}, to obtain target and draft next-token probabilities along the block. Through a randomized rule called the \textbf{verification algorithm}, we accept the first $\tau \in \{0,\ldots,L\}$ of these $L$ tokens and sample an additional correction token, such that this output matches the true target distribution. The \textbf{block efficiency} $\mathbb{E}[\tau+1]$ is the average number of decoded tokens per $M_p$ call; in the special case where the target and draft distributions are identical, it achieves its maximum value of $L+1$. When a cheap draft model is used and $L$ is not too large, block efficiency is an accurate indicator of walltime speedup.

\paragraph{Speculative sampling.} The original schemes of \citet{chen2023accelerating,leviathan2023fast} first generate a draft block $a_{1:L} \sim q_L(\cdot|\con)$. They \emph{independently} accept each token $a_i$ with probability $\min(1,p(a_i|a_{1:i-1})/q(a_i|a_{1:i-1}))$, and $\tau$ is the maximum value such that $a_{1:\tau}$ consists of only accepted tokens. Essentially, they choose the longest prefix of acceptances. Finally, they sample an additional \emph{correction token} $y \sim \ptokenres(\cdot|\con,a_{1:i})$ from a residual distribution,
\begin{equation}
    \ptokenres(\cdot|\con,a_{1:i}) \propto \max\{p(\cdot|\con,a_{1:i})-q(\cdot |\con,a_{1:i}),0\},
\end{equation}
as long as $\tau<L$. If $\tau=L$, they just sample $y \sim p(\cdot|\con,a_{1:L})$ directly from the target. This ensures that the final output $(a_{1:\tau},y)$ matches the target distribution. 

\paragraph{Draft extensions.} Some recent works have improved upon block efficiency in speculative sampling by altering the drafting phase, through retrieval or cascading \citep{he2023rest,chen2024cascade}, hierarchical drafting \citep{sun2024triforce}, distillation \citep{zhou2023distillspec,liu2023online}, layer skipping \citep{zhang2023draft,elhoushi2024layerskip}, or multi-token heads \citep{gloeckle2024better,samragh2025your}). We instead focus on methods which improve block efficiency by altering the verification algorithm, and these can be integrated with the above works for further gains.

\paragraph{Tree verification.} Monte Carlo tree verification from \citet{hu2024accelerated} provably improves the block efficiency of standard speculative sampling. While the underlying tree structure might suggest this is a multi-path method, we emphasize that it is a single-path algorithm: Section 5 in \citet{hu2024accelerated} specifies a linear chain of draft tokens. This means it is provably worse than block verification: see the third paragraph of Section 7 in \citet{sun2024block}.

\paragraph{Block verification (BV).} To improve block efficiency in the verification stage, \citet{sun2024block} relax the token-independence assumption for acceptance in standard speculative sampling. They sample the draft block $a_{1:L} \sim q_L(\cdot|\con)$, and define on-path weights $w$ from $w(\emptyset|\con) = 1$:
\begin{equation}
    w(a_{1:i}|\con)=\min\left\{1,\frac{w(a_{1:i-1}|\con)p(a_i|a_{1:i-1})}{q(a_i|a_{1:i-1})}\right\}.
    \label{eqn:bv-weights}
\end{equation}They independently accept each \emph{prefix} $a_{1:i}$ with probability
\begin{align}
    &r(a_{1:i}|\con) = \sum_{x \in \V} \max\{p(x|\con,a_{1:i}) - q(x|\con,a_{1:i}),0\}, \\ 
    &h^{\text{block}}(a_{1:i}|\con) = \frac{r(a_{1:i}|\con)}{1-w(a_{1:i}|\con) + r(a_{1:i}|\con)},
\end{align}
for $i<L$, and $a_{1:L}$ with probability $w(a_{1:L}|\con)$. They select the longest accepted prefix, of length $\tau$, and sample a correction token $y\sim \pblockres(\cdot|\con,a_{1:i})$, a weighted version of $\ptokenres$ proportional to the following vector,
\begin{equation}
    \max\{w(a_{1:i}|\con)p(\cdot|\con,a_{1:i})-q(\cdot |\con,a_{1:i}),0\},
\end{equation}
if $\tau<L$. If $\tau=L$, they sample $y \sim p(\cdot|\con,a_{1:L})$. Like in speculative sampling, the output $(a_{1:\tau},y)$ follows the target distribution. \citet{sun2024block} prove that block verification achieves the highest block efficiency among any verification algorithm that only takes in on-path distribution values.

\subsection{Multi-Path}
\label{subsec:multi-path-background}

There are also recent lines of work that extend $L$-step speculative sampling to the \emph{multi-path} setting \citep{sun2023spectr,spector2023accelerating,cai2024medusa,li2024eagle}. In this setting, $K>1$ length-$L$ draft blocks are sampled from $q_L$. The target model $M_p$ is again called \emph{once} on all draft blocks in parallel, to obtain target and draft next-token probabilities along all $K$ paths. Then, using a \textbf{multi-path verification algorithm}, a prefix of one of the blocks is accepted, and an additional correction token from a residual distribution is sampled in order to match the target distribution. Block efficiency is defined in the same way as previously. For moderate $K$ and $L$, leveraging GPU parallelization, there is generally little overhead in performing the \emph{batched} forward pass across $K$ sequences, relative to a forward pass on one sequence \citep{agrawal2024sarathiserve,dao2022flashattention}. 

We use the concept of a \textbf{draft tree} to enumerate all distinct prefixes among the $K$ sampled paths. We will compare our methods against two categories of multi-path verification algorithms which select a node from the draft tree, i.e. a prefix of one of the blocks, in different ways.

\paragraph{OTLP-based algorithms.} These perform a top-down traversal of the draft tree. At each step, they compute a feasible solution of an optimal transport linear program (OTLP) to determine the next token and progress to a child node, terminating when they land off the draft tree. SpecTr \citep{sun2023spectr}, NSS \citep{miao2024specinfer}, and SpecInfer \citep{miao2024specinfer} are key examples, and recent theoretical OTLP solvers \citep{khisti2025multi,hu2025towards} further improve block efficiency.

\paragraph{Traversal verification.} To the best of our knowledge, traversal verification in \citet{weng2025traversal} is the only multi-path algorithm that traverses the draft tree from the bottom-up. Like our algorithm in \cref{sec:greedy}, this reduces to BV in the single-path $K=1$ case. In our experiments, traversal verification and our methods outperform OTLP-based methods, but are each effective in different $(p,q)$ regimes.


\section{Single-Path Information-Agnostic LP}
\label{sec:single-path-lp}

We first formally define the class of single-path verification algorithms. We denote $\V^{\leq k} = \V^0 \cup \V^1 \cup \ldots \cup \V^k$, where $\V^k$ is the set of length-$k$ sequences of tokens in $\V$, and thus $\V^{\leq k}$ is the set of sequences of length $\leq k$. We use $L$ to denote the draft block length. For the remainder of the paper, we use notation from \cref{sec:background}, omitting the context $\con$ and subscripts on $p,q$ when they are clear.

\begin{definition}[Single-path draft verification algorithm] A \textbf{single-path verification algorithm} $\Phi$ takes in a sampled draft block $X_{1:L} \sim q_L(\cdot)$, and the full $L$-step target and draft distributions $p_L$ and $q_L$, and returns a nonempty sequence in $\V^{\leq L+1}$. It is \textbf{valid} under:
\begin{itemize}
    \item \textbf{Prefix-matching:} the algorithm returns a (possibly empty) prefix $X_{1:\tau}$ of the draft block followed by an additional correction token $Y$, for some $\tau \in \{0,\ldots,L\}$.
    \item \textbf{Target-matching:} for any context $\con$, any draft and target models $M_p$ and $M_q$, and any block length $L$, when we generate $L-\tau$ additional tokens $Z \sim p_{L-\tau}(\cdot|X_{1:\tau},Y)$,  we have $(X_{1:\tau},Y,Z) \sim_p p_{L+1}(\cdot)$, where $\sim_p$ denotes equality of distributions.
\end{itemize}
If we also require that $\Phi$ can only take in the \emph{on-path} target and draft distributions $p(\cdot|X_{1:i}),q(\cdot|X_{1:i})$ for $i \in \{0,\ldots,L\}$, then it is called \textbf{information-restricted}.
\label{def:single}
\end{definition}

Target-matching means that sampling from $M_p$ autoregressively after verification, until a total of $L+1$ tokens are generated, gives the same result as just sampling $L+1$ tokens from $M_p$ autoregressively without verification. This guarantees that running the verification algorithm and appending its output to the context iteratively maintains the target distribution, even as $\tau$ varies: see Lemma 2 in Appendix B of \citet{sun2024block} for a formal proof.

Prefix-matching forces the verification output to only deviate from the draft block at its last token. This is a strict modeling requirement, not a technical convenience. To bypass prefix-matching while maintaining the target distribution, one must access the next-token target distribution $p(\cdot | \con)$ for contexts $\con$ outside the draft tree. This is not possible without performing another $M_p$ forward pass, which would defeat the goal of speculative decoding: generating multiple tokens in one target call. 


Importantly, our definition is \emph{less strict} than in \citet{sun2024block}, because they only consider the class of valid single-path \emph{information-restricted} verification algorithms, and prove that block verification is optimal in this class. Surprisingly, we find that block verification is optimal in the class of \emph{all} valid single-path verification algorithms. To prove this, we first define node budgets, which represent how much mass a verification algorithm has allocated along a path relative to the target distribution.
\begin{definition}
    For any single-path verification algorithm $\Phi$, we define \textbf{node budgets} for $a_{1:i} \in \V^{\leq L+1}$:
    \begin{equation}
        D_\Phi(a_{1:i})=1-\sum_{j=1}^{i} \frac{\mathbb{P}(X_{1:\tau}=a_{1:j-1},Y=a_j)}{p(a_{1:j})}.
    \end{equation}
    This can be represented by a function $D_\Phi: \V^{\leq L+1} \to \mathbb{R}$.
    \label{def:node-budgets-sp}
\end{definition}

Using node budget variables, we define the \textbf{single-path information-agnostic LP} in \cref{thm:single-lp}, which describes what values of $D_\Phi$ a valid single-path verification algorithm $\Phi$ can induce. Here, the chain of $D_\Phi$ inequalities along paths encode the target-matching constraint, and the pointwise upper bounds on $D_\Phi p$ encode prefix-matching. We defer the proof to \cref{appendix:single-lp}, as the special case $K=1$ of the multi-path LP in \cref{thm:multi-lp}, which we prove in \cref{appendix:multi-lp}.

\begin{theorem} A single-path verification algorithm $\Phi$ is valid if and only if the node budgets $D_\Phi:\V^{\leq L+1} \to \mathbb{R}$ are \emph{feasible} in the following LP:
\begin{align}
    \max &\sum_{a_{1:i} \in \V^{\leq L}} D_\Phi(a_{1:i}) p(a_{1:i}) \\
    \text{s.t.}\quad& 1=D_\Phi(a_{1:0}) \geq D_\Phi(a_{1:1}) \geq \ldots \geq D_\Phi(a_{1:L}) \\ &\quad \quad \geq D_\Phi(a_{1:L+1})= 0 \quad \forall a_{1:L+1} \in \V^{L+1},\\
    & D_\Phi(a_{1:i})p(a_{1:i}) \leq q(a_{1:i}) \quad \forall a_{1:i} \in \V^{\leq L}.
\end{align}
The objective value is the block efficiency $\mathbb{E}[\tau+1]$ for $\Phi$.
\label{thm:single-lp}
\end{theorem}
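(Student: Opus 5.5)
Write $O(a_{1:j}) := \mathbb{P}(X_{1:\tau}=a_{1:j-1},Y=a_j)$ for the output mass at a node, so that $D_\Phi(a_{1:i}) = 1-\sum_{j\le i} O(a_{1:j})/p(a_{1:j})$. Set $m(a_{1:i}) := D_\Phi(a_{1:i})p(a_{1:i})$. The plan is to show that, for any $\Phi$ satisfying prefix-matching, $m(a_{1:i})$ is exactly $\mathbb{P}(\tau\ge i,\, X_{1:i}=a_{1:i})$. The LP constraints then become elementary facts about such "survival" masses. For the converse, any feasible $D_\Phi$ will be realized by constructing a joint law of $(X_{1:L},\tau)$ with these survival masses.

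\emph{Necessity.} First I rewrite target-matching. The law of $(X_{1:\tau},Y,Z)$ at a full sequence $a_{1:L+1}$ is
\[
\sum_{j=1}^{L+1} O(a_{1:j})\,p(a_{j+1:L+1}\mid a_{1:j}).
\]
Setting this equal to $p(a_{1:L+1})$ and dividing by it gives exactly $D_\Phi(a_{1:L+1})=0$. The converse direction holds too, so target-matching is equivalent to $D_\Phi=0$ on $\V^{L+1}$. Monotonicity along paths is immediate from $O\ge 0$, and $D_\Phi(\emptyset)=1$ holds by definition.

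For the upper bound, I marginalize target-matching to the prefix $a_{1:i}$. The total mass of outputs extended by $p$ that begin with $a_{1:i}$ is $p(a_{1:i})$. This splits into two parts: outputs of length $\le i$, which contribute $\sum_{j\le i}O(a_{1:j})p(a_{j+1:i}\mid a_{1:j})$, and outputs of length $>i$ that begin with $a_{1:i}$. By prefix-matching, the second part equals $\mathbb{P}(\tau\ge i, X_{1:i}=a_{1:i})$. Rearranging gives $m(a_{1:i})=\mathbb{P}(\tau\ge i,X_{1:i}=a_{1:i})\le q(a_{1:i})$. This identity also gives the objective: $\mathbb{E}[\tau+1]=\sum_{i=0}^{L}\mathbb{P}(\tau\ge i)=\sum_{a_{1:i}\in\V^{\le L}} m(a_{1:i})$.

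\emph{Sufficiency.} Given a feasible $D$, define $m = Dp$. The constraints give three facts: $m\le q$; $\sum_x m(a_{1:i},x)\le m(a_{1:i})$, since $D(a_{1:i},x)\le D(a_{1:i})$ and $\sum_x p(a_{1:i},x)=p(a_{1:i})$; and $m(\emptyset)=1$. I need a randomized stopping rule, allowed to depend on all of $X_{1:L}$ and on $p_L,q_L$, such that $\mathbb{P}(X_{1:i}=a_{1:i},\tau\ge i)=m(a_{1:i})$.

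I would build this bottom-up as nested fractional "acceptance sets" $S(a_{1:i})$, which are sub-measures of $q_L$ restricted to completions of $a_{1:i}$, with total mass $m(a_{1:i})$ and with $S(a_{1:i+1})\subseteq S(a_{1:i})$. At the leaves, take $S(a_{1:L})=(m/q)(a_{1:L})$ times the point mass. At an internal node, take the union of the children's sets. Then add extra mass from the unused parts of the children's regions until the total reaches $m(a_{1:i})$. This is possible because $\sum_x m(a_{1:i},x)\le m(a_{1:i})\le q(a_{1:i})$.

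Given $X_{1:L}$, I then set $\tau$ to be the largest $i$ with $X\in S(X_{1:i})$, realized by a single shared uniform variable so that the nested sets give a coherent stopping time. On the event $\{\tau=j-1,\,X_{1:j-1}=a_{1:j-1}\}$, which has mass $m(a_{1:j-1})-\sum_x m(a_{1:j-1},x)=\sum_x O(a_{1:j-1},x)$, I sample $Y$ proportional to $O(a_{1:j-1},\cdot)=(D(a_{1:j-1})-D(a_{1:j-1},\cdot))p(a_{1:j-1},\cdot)\ge 0$. At depth $L$, $D=0$ at the leaves makes this just $Y\sim p(\cdot\mid a_{1:L})$. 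The resulting algorithm has node budgets equal to $D$, so it is valid by the necessity computation run in reverse.

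\emph{Main obstacle.} I expect the main obstacle to be the nested-set construction: making the stopping rule consistent, i.e. that $\tau\ge i$ forces $\tau\ge i-1$, while hitting every $m$ exactly. This is where the ability to use the whole draft block $X_{1:L}$, and not only a sequential token-by-token rule, matters. A purely sequential rule can fail, because $m(a_{1:i+1})/q(a_{1:i+1}\mid\cdot)$ need not be bounded by $m(a_{1:i})$.
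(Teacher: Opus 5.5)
Your proof is correct, but it takes a different route from the paper's. The paper never argues at $K=1$ directly. It proves the multi-path LP (\cref{thm:multi-lp}) in four steps:
\begin{itemize}
\item it applies Gale's supply--demand theorem to a bipartite graph between outputs and $K$-tuples of drafts;
\item it derives the path-sum form of target-matching;
\item it prunes the Hall-type inequalities down to antichains using parent and minimal sets;
\item it converts $R_\Phi$ into $D_\Phi$ through a telescoping identity.
\end{itemize}
The single-path LP is then the case $K=1$: singleton antichains give the pointwise bound, and summing recovers the antichain bounds.

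Your necessity argument identifies $D_\Phi(a_{1:i})p(a_{1:i})$ with $\mathbb{P}(\tau\ge i,\,X_{1:i}=a_{1:i})$ by marginalizing target-matching. This has the same content as the paper's telescoping identity, and the paper uses it in exactly your form when proving \cref{thm:bv-opt}. You simply reach it more directly.

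The real difference is sufficiency. You replace the existence-only Gale/Hall step with an explicit coupling:
\begin{itemize}
\item nested acceptance probabilities $f_L\le\cdots\le f_0\equiv 1$, built bottom-up, where each level is feasible exactly because $\sum_x m(a_{1:i},x)\le m(a_{1:i})\le q(a_{1:i})$;
\item $\tau$ realized through a shared uniform;
\item $Y$ drawn from the residual $O(a_{1:\tau},\cdot)$.
\end{itemize}
This buys an elementary, checkable, constructive proof. It exhibits a concrete look-ahead verification rule for every feasible $D$ and makes clear why the rule must depend non-causally on the whole block. A causal rule would additionally need $m(a_{1:i+1})\le m(a_{1:i})\,q(a_{i+1}\mid a_{1:i})$; this is the precise form of your closing remark.

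What the paper's heavier machinery buys is generality. For $K>1$ the constraints become submodular bounds over antichains rather than local pointwise bounds, so your node-by-node filling does not extend directly. The Hall-type characterization covers all $K$ at once, with your case as a corollary.

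Both you and the paper read the ``if'' direction as realizability: every feasible $D$ is the node-budget function of some valid algorithm. That is how the theorem is used later.
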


Because the feasibility conditions in the single-path LP only involve pointwise and pathwise inequalities, it is not hard to compute the optimal objective value and node budgets for a valid single-path verification algorithm, by using greedy allocation along paths. These node budgets match those derived from block verification, and thus block verification is the optimal valid single-path algorithm. 

\begin{theorem}
    Block verification, which is a valid single-path verification algorithm $\Phi_{BV}$, achieves the highest block efficiency of any valid single-path verification algorithm.
    \label{thm:bv-opt}
\end{theorem}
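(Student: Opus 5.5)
By \cref{thm:single-lp}, it suffices to work entirely inside the single-path LP. Any valid $\Phi$ has feasible node budgets $D_\Phi$, and its block efficiency equals the LP objective $\sum_{a_{1:i}\in\V^{\leq L}} D_\Phi(a_{1:i})p(a_{1:i})$. So I will (i) exhibit a feasible $D^\star$ that upper bounds every feasible $D$ \emph{pointwise}, and (ii) check that block verification induces exactly $D^\star$. Since every coefficient $p(a_{1:i})\geq 0$, a pointwise maximum also maximizes the objective. That gives optimality of $\Phi_{BV}$ over all valid algorithms, including those with off-path information.

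\textbf{Greedy pointwise maximum.} Define $D^\star(\emptyset)=1$ and, recursively for $1\le i\le L$,
\[
D^\star(a_{1:i})=\min\{D^\star(a_{1:i-1}),\, q(a_{1:i})/p(a_{1:i})\},
\]
with the convention that the ratio is $+\infty$ when $p(a_{1:i})=0$. Set $D^\star(a_{1:L+1})=0$. Feasibility is immediate:
\begin{itemize}
\item the chain is nonincreasing by construction;
\item the last step $D^\star(a_{1:L})\ge 0$ holds because $q,p\ge0$;
\item the pointwise bound $D^\star p\le q$ holds because of the second term in the min.
\end{itemize}
For maximality, I induct on $i$. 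If $D$ is feasible and $D(a_{1:i-1})\le D^\star(a_{1:i-1})$, then
\[
D(a_{1:i})\le D(a_{1:i-1})\le D^\star(a_{1:i-1}),
\]
and also $D(a_{1:i})\le q(a_{1:i})/p(a_{1:i})$ whenever $p(a_{1:i})>0$. Hence $D(a_{1:i})\le D^\star(a_{1:i})$. Nodes with $p(a_{1:i})=0$ contribute nothing to the objective, so they can be ignored.

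\textbf{Matching block verification.} Unwinding the recursion gives
\[
D^\star(a_{1:i})=\min_{j\le i}\frac{q(a_{1:j})}{p(a_{1:j})}\wedge 1.
\]
Multiplying by $p(a_{1:i})/q(a_{1:i})$, the BV weight recursion \eqref{eqn:bv-weights} says
\[
w(a_{1:i})=\min\{1,\, w(a_{1:i-1})\,p(a_i|a_{1:i-1})/q(a_i|a_{1:i-1})\}.
\]
An easy induction then shows
\[
w(a_{1:i})=\min_{j\le i}\frac{p(a_{j+1:i}\mid a_{1:j})}{q(a_{j+1:i}\mid a_{1:j})}\wedge 1,
\]
with the convention that the $j=i$ term is $1$. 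From this, $w(a_{1:i})\,q(a_{1:i})/p(a_{1:i})=D^\star(a_{1:i})$ whenever this quantity is $\le$ the previous budget. So I conjecture the identity
\[
D_{\Phi_{BV}}(a_{1:i})\,p(a_{1:i})=\min\{\,\cdots\,\}\quad\text{equivalently}\quad D_{\Phi_{BV}}(a_{1:i})=\frac{w(a_{1:i})\,q(a_{1:i})}{p(a_{1:i})}.
\]
It remains to verify this equals $D^\star$ by computing $D_{\Phi_{BV}}$ directly from \cref{def:node-budgets-sp}. That requires the BV output law $\mathbb{P}(X_{1:\tau}=a_{1:j-1},Y=a_j)$. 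Two routes are available:
\begin{itemize}
\item cite the output-distribution lemma of \citet{sun2024block}, which states
\[
\mathbb{P}(X_{1:\tau}=a_{1:j-1},Y=a_j)=p(a_{1:j-1})\max\{w\,p(a_j|a_{1:j-1})-q(a_j|a_{1:j-1}),0\}\cdot(\text{normalization}),
\]
\item or simply observe that the BV node budgets telescope to the quantity above.
\end{itemize}

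\textbf{Main obstacle.} The hard step is this last identification. One must show that BV's prefix acceptance probabilities $h^{\text{block}}$ and its residuals $\pblockres$ produce node budgets satisfying $D_{\Phi_{BV}}(a_{1:i})p(a_{1:i})=\min\{D_{\Phi_{BV}}(a_{1:i-1})p(a_{1:i}),\,q(a_{1:i})\}$, i.e.\ that BV is exactly the greedy allocation. My approach is to prove by induction on $i$ that
\[
\mathbb{P}(\text{BV reaches } a_{1:i}\text{ with mass still unassigned})=w(a_{1:i})\,q(a_{1:i}).
\]
Given that, the mass emitted as $(a_{1:i-1},Y=a_i)$ equals $p(a_{1:i-1})\bigl[D(a_{1:i-1})p(a_i|a_{1:i-1})-\min\{\cdot\}\bigr]_+$, which is precisely the greedy decrement. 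An alternative route avoids the computation entirely. \citet{sun2024block} already show BV is optimal among information-restricted algorithms. Because $D^\star$ depends only on on-path quantities, the LP optimum is attained by some information-restricted algorithm, and BV's efficiency is at least that algorithm's, hence equal to the LP value. I would present this route if the direct computation becomes messy. Its one gap is realizability: I must check that the realizing algorithm constructed from $D^\star$ in the proof of \cref{thm:single-lp} uses only on-path values, which I expect to hold.
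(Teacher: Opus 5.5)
Your main route is essentially the paper's proof. You dominate every feasible $D_\Phi$ pointwise by induction, using only the chain constraint and $D_\Phi p\le q$; your $D^\star(a_{1:i})=\min_{0\le j\le i}q(a_{1:j})/p(a_{1:j})$ is exactly the paper's bound $w(a_{1:i})q(a_{1:i})/p(a_{1:i})$. You then identify BV's budgets through the result of \citet{sun2024block} that $\mathbb{P}(\tau\ge i\mid X_{1:i}=a_{1:i})=w(a_{1:i})$, as the paper does.

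Your ``main obstacle'' closes at once via the telescoping identity in the proof of \cref{thm:multi-lp}, which gives $D_\Phi(a_{1:i})p(a_{1:i})=\mathbb{P}(\tau\ge i,\,X_{1:i}=a_{1:i})$ for any valid $\Phi$, so no output-law computation is needed. The output law you quote is misstated anyway: for $j\le L$ it is $q(a_{1:j-1})\max\{w(a_{1:j-1})p(a_j\mid a_{1:j-1})-q(a_j\mid a_{1:j-1}),0\}$, with prefactor $q$ rather than $p$ and no normalization. Your fallback route does not close as written, because \cref{thm:single-lp} is proved non-constructively via Gale's theorem, so there is no realizing algorithm whose on-path dependence you could check.
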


See \cref{appendix:bv-opt} for a proof of \cref{thm:bv-opt}. These results show that prefix-matching is the key barrier to improving acceptance: even with access to the full joint target and draft distributions, prefix-matching (pointwise upper bounds on $D_\Phi p$ terms) prevents us from getting better block efficiency (the sum of $D_\Phi p$ terms) than block verification. This motivates us to explore verification algorithms that draft \textit{multiple} candidate paths in the next section. In this setting, the valid prefix output space grows, thereby loosening the prefix-matching requirement and improving block efficiency.

\section{Multi-Path Information Agnostic LP}

Now, we show that off-path distribution information can improve block efficiency when $K>1$ draft paths are generated, even with prefix-matching and target-matching requirements. We first define the class of multi-path verification algorithms, and extend \cref{thm:single-lp} to these algorithms.

\begin{definition}[$K$-path draft verification algorithm] A $K$-path verification algorithm $\Phi$ takes in $K$ i.i.d. sampled draft blocks $X^{(1)}_{1:L},\ldots,X^{(K)}_{1:L} \sim q_L(\cdot|\boldsymbol{c})$, and the full $L$-step distributions $p_L$ and $q_L$, and returns a nonempty sequence in $\V^{\leq L+1}$. It is \textbf{valid} under:
\begin{itemize}
    \item \textbf{Prefix-matching:} the algorithm returns a (possibly empty) prefix $X_{1:\tau}$ of \textit{some} draft block followed by a correction token $Y$, for some $\tau \in \{0,\ldots,L\}$.
    \item \textbf{Target-matching:} this is the same as in \cref{def:single}.
\end{itemize}
If $\Phi$ can only take in the \emph{on-path} target and draft distributions, which are $p(\cdot|X^{(k)}_{1:i}),q(\cdot|X^{(k)}_{1:i})$ for $k\in [K]$ and $i \in \{0,\ldots,L\}$, then it is called \textbf{information-restricted}.
\end{definition}

To the best of our knowledge, all existing valid multi-path verification algorithms (\cref{subsec:multi-path-background}) are information-restricted. We are the first to consider theoretical efficiency limits in the absence of information-restriction, and encode prefix-matching and target-matching into an LP.

To extend the single-path LP from \cref{thm:single-lp} to this setting, we first define node budgets $D_\Phi$ in the same way as \cref{def:node-budgets-sp}. We also require the notion of an \emph{antichain}, which is a set of variable-length paths in $\V^{\leq L}$ where no path is a prefix of another. Antichains are useful because the mass of an autoregressive distribution over an antichain can be computed by summing its probabilities at all antichain elements, due to the disjointness of autoregressively sampling antichain elements.

\begin{definition}
    An \textbf{antichain} of $\V^{\leq L}$ is a subset of $\V^{\leq L}$ where no sequence in the antichain is a prefix of another. We denote the set of all antichains in $\V^{\leq L}$ by $\mathcal{A}(\V^{\leq L})$.
    \label{def:antichain}
\end{definition}

Now, to form the \textbf{multi-path information-agnostic LP} in \cref{thm:multi-lp}, we replace the pointwise upper bounds on $D_\Phi p$ by subset-sum upper bounds over antichains. As mentioned at the end of \cref{sec:single-path-lp}, this corresponds to altering the prefix output space. The proof uses Hall-type feasibility constraints, and is fairly involved. Due to space constraints, we defer the proof to \cref{appendix:multi-lp}.

\begin{theorem}A $K$-path verification algorithm $\Phi$ is valid if and only if the node budgets $D_\Phi:\V^{\leq L+1} \to \mathbb{R}$ are \emph{feasible} in the following LP:
    \begin{align}
        \max &\sum_{a_{1:i} \in \V^{\leq L}} D_\Phi(a_{1:i}) p(a_{1:i}) \\
        \text{s.t.}\quad& 1=D_\Phi(a_{1:0}) \geq D_\Phi(a_{1:1}) \geq \ldots \geq D_\Phi(a_{1:L}) \\
        & \quad \quad \geq D_\Phi(a_{1:L+1})= 0 \quad \forall a_{1:L+1} \in \V^{L+1},\\
        & 1 - \sum_{a_{1:i} \in T} D_\Phi(a_{1:i})p(a_{1:i}) \geq \\
        &\quad \; \left(1-\sum_{a_{1:i} \in T} q(a_{1:i})\right)^K \quad \forall T \in \mathcal{A}(\V^{\leq L}).
    \end{align}
The objective value is the block efficiency $\mathbb{E}[\tau+1]$ for $\Phi$.
\label{thm:multi-lp}
\end{theorem}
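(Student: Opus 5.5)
Write $\mu(a_{1:j}) := \mathbb{P}(X_{1:\tau}=a_{1:j-1}, Y=a_j)$ for the output law of $\Phi$. By definition, $D_\Phi(a_{1:i}) = 1-\sum_{j\le i}\mu(a_{1:j})/p(a_{1:j})$. The plan has three parts: translate target-matching into the monotone chain, translate prefix-matching into the antichain inequalities, and then compute the objective.

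\textbf{Target-matching.} Appending $Z\sim p_{L-\tau}(\cdot\mid X_{1:\tau},Y)$ to an output of length $j$ puts mass $\mu(a_{1:j})\,p(a_{1:L+1})/p(a_{1:j})$ on $a_{1:L+1}$. Hence target-matching is equivalent to
\[\sum_{j=1}^{L+1}\mu(a_{1:j})/p(a_{1:j})=1 \quad\text{for every } a_{1:L+1},\]
that is, $D_\Phi(a_{1:L+1})=0$. Since $\mu\ge 0$, the budgets are nonincreasing along each path, and $D_\Phi(\emptyset)=1$. Conversely, any nonincreasing $D$ with these boundary values defines a nonnegative $\mu(a_{1:j}) := p(a_{1:j})\,(D(a_{1:j-1})-D(a_{1:j}))$ that satisfies target-matching. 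Note that $\mu$ depends only on the prefix $a_{1:j}$, which is consistent across paths.

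\textbf{Objective.} Summation by parts gives
\[\mathbb{E}[\tau+1]=\sum_j j\sum_{a_{1:j}}\mu(a_{1:j})=\sum_{i=0}^{L}\sum_{a_{1:i}}\ \sum_{a_{1:j}\succ a_{1:i},\, j>i}\mu(a_{1:j}).\]
The inner sum equals $p(a_{1:i})\,D(a_{1:i})$. To see this, note that $\sum_{j>i}\mu/p$ along each extension telescopes to $D(a_{1:i})$, and integrating against $p(\cdot\mid a_{1:i})$ yields the mass.

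\textbf{Prefix-matching.} Let $\nu(a_{1:i})$ be the probability that the accepted prefix has $a_{1:i}$ as a prefix. By the computation above, $\nu(a_{1:i})=\sum_{j>i,\ a_{1:j}\succ a_{1:i}}\mu(a_{1:j}) = D(a_{1:i})\,p(a_{1:i})$. For an antichain $T$, the events $\{a_{1:i}\preceq X_{1:\tau}\}$ for $a_{1:i}\in T$ are disjoint, so their total probability is $\sum_T Dp$. These events can only occur if some draft path has a prefix in $T$, and the probability that no draft path does is $(1-\sum_T q)^K$. This gives necessity.

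For sufficiency, I would build a coupling. Given $D$, I want to choose the accepted prefix $X_{1:\tau}$ with law $\rho(a_{1:i})=\nu(a_{1:i})-\sum_{x}\nu(a_{1:i}x)$, which is nonnegative by monotonicity, and require it to be a prefix of one of the $K$ i.i.d. draws. The correction token $Y$ is then sampled from the conditional law $\mu(a_{1:i}y)/\rho(a_{1:i})$ independently.

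The existence of such a coupling is a bipartite transport problem between the law of the draft tuple and the law $\rho$ on $\V^{\le L}$. By Hall/Strassen, it is feasible if and only if for every set $S$ of nodes, $\rho(S)\le \mathbb{P}(\text{some path has a prefix in } S)$. Two reductions bring this down to the stated condition. First, the neighborhood of $S$ depends only on the antichain $T=\min(S)$ of minimal elements. Second, replacing $S$ by its full up-closure $\uparrow T$ only increases $\rho(S)$, and $\rho(\uparrow T)=\sum_T\nu=\sum_T Dp$. The right-hand side is $1-(1-\sum_T q)^K$, so Hall's condition becomes exactly the antichain inequality.

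\textbf{Main obstacle.} The hard step is this sufficiency argument. It requires a careful Hall/Strassen statement for possibly infinite but countable supports; since $\V^{\le L}$ is finite, ordinary Hall with rational approximation or max-flow min-cut suffices. It also requires the check that worst-case sets are up-closed. Finally, $\Phi$ may use the full joint distributions $p_L,q_L$, which is what allows this coupling to be computed.
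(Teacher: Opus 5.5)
Your proposal is correct and follows essentially the same route as the paper: target-matching becomes $D_\Phi(a_{1:L+1})=0$, with monotonicity coming from nonnegativity of the output law, and both the objective and the antichain sums come from the telescoping identity $\sum_{j>i}\sum R_\Phi(a_{1:j}) = D_\Phi(a_{1:i})p(a_{1:i})$. Prefix-matching is likewise a Gale/Hall transport condition whose neighborhoods depend only on the minimal antichain, with up-closures as the extremal sets. The differences are cosmetic: you prove necessity by a direct disjoint-events bound and couple only the accepted prefix $X_{1:\tau}$ with the draft tuple (drawing $Y$ afterward), whereas the paper puts the full outputs $(X_{1:\tau},Y)$ on one side of the bipartite graph and reduces via parent sets.
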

This reduces to the single-path LP in the case $K=1$. However, for $K>1$, there is a important distinction: the upper bound on sums of $D_\Phi p$ now represent a \emph{submodular} function of $T$, unlike the \emph{additive (modular)} pointwise upper bounds on $D_\Phi p$ in \cref{thm:single-lp}. We now explain how this submodularity naturally arises from a \textbf{path selection rule} in valid $K$-path verification algorithms. In \cref{lem:canon-decomp} (proof in \cref{appendix:canon-decomp}), we first prove a canonical decomposition of such algorithms into randomized selection of one of the drafted $K$ paths, followed by valid single-path verification.

\begin{lemma}
    A valid $K$-path verification algorithm has the following equivalent definition. First, given $K$ paths sampled i.i.d. from $q$, randomly select one through a path selection rule $\Gamma$, which can depend on off-path probability values. Say this path follows the distribution $q^\Gamma$ over $\V^L$. Then, run a valid single-path verification algorithm on this path with target values $p$ and draft values $q^\Gamma$.
    \label{lem:canon-decomp}
\end{lemma}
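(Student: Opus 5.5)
We prove both directions of the equivalence. The direction from the decomposition to validity is routine. The substantive direction shows that every valid $K$-path algorithm factors as a path selection followed by a valid single-path verification.

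\textbf{Easy direction.} Suppose $\Gamma$ selects one path $X^{\Gamma}_{1:L}$ with marginal law $q^\Gamma$. Suppose further that $\Phi_1$ is a valid single-path algorithm for the pair $(p,q^\Gamma)$. Then the output of $\Phi_1$ is a prefix of $X^\Gamma$, hence of some draft block, followed by a token, so prefix-matching holds. Target-matching for the composite depends only on the law of the output. That law is obtained by feeding a $q^\Gamma$-distributed path into $\Phi_1$. Since $\Phi_1$ is target-matching under the draft distribution $q^\Gamma$, the composite is target-matching. One small point needs care: \cref{def:single} quantifies over draft models. We only need the property for the particular pair $(p,q^\Gamma)$, since $\Phi_1$ may depend on the full tables of $q^\Gamma$. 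We would state this explicitly.

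\textbf{Hard direction, setup.} Let $\Phi$ be valid, with output $(X_{1:\tau},Y)$. By prefix-matching, there is a measurable (possibly randomized) choice of an index $k^\ast\in[K]$ such that $X_{1:\tau}$ is a prefix of $X^{(k^\ast)}$. When several indices work, break ties uniformly at random among the blocks sharing that prefix. Define $\Gamma$ to output $X^{(k^\ast)}$; this yields a joint law of (selected path, output). Let $q^\Gamma$ be the marginal of the selected path. Now define the single-path algorithm $\Phi_1$: given a path $x_{1:L}$, sample the output from the conditional law of $(X_{1:\tau},Y)$ given $X^{(k^\ast)}=x_{1:L}$. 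This conditional depends on $p_L$, $q_L$ and $K$, all of which are available because full information is allowed. Running $\Gamma$ and then $\Phi_1$ reproduces the joint law of the path and the output, and hence the output law of $\Phi$.

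\textbf{Hard direction, validity of $\Phi_1$.} Each output of $\Phi_1$ is a prefix of its input followed by a token, so prefix-matching holds. When $\Phi_1$ is fed a path drawn from $q^\Gamma$, its output law equals that of $\Phi$. Completing with $Z\sim p_{L-\tau}$ therefore gives $p_{L+1}$, so target-matching holds for the pair $(p,q^\Gamma)$.

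\textbf{Main obstacle.} The delicate part is the notion of validity itself. \cref{def:single} demands target-matching for all models, whereas $\Phi_1$ is built specifically for $q^\Gamma$. The resolution is to interpret the lemma's ``valid single-path algorithm with draft values $q^\Gamma$'' as validity relative to the input pair $(p,q^\Gamma)$. This is the sense in which \cref{thm:single-lp} is applied pointwise per input pair. Under that reading, the decomposition can be combined with \cref{thm:single-lp}, applied to $(p,q^\Gamma)$, to explain the submodular antichain bounds in \cref{thm:multi-lp}.

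A secondary point is measurability of the tie-breaking. This is trivial here because $\V$ is finite, so all laws are finite discrete distributions and conditioning is elementary.
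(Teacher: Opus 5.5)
Your proposal is correct and follows essentially the paper's route. The paper also selects a path by splitting the probability of the chosen prefix uniformly among the $m(a_{1:i})$ blocks that share it (its rule $\pi_3$), and then takes the output's conditional law given the selected path as the single-path algorithm (its $\pi_5$, with the remark that conditioning on the other paths ``cancels''). Your version is slightly more complete: you also check the converse direction and state explicitly that validity is meant relative to the pair $(p,q^\Gamma)$, a point the paper passes over.
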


We call $q^\Gamma$ the \textbf{skewed draft distribution} induced by $\Gamma$. As this is a distribution over $\V^L$, just as for $p_L,q_L$, we can extend it to distribution over each $\V^i$ for $0 \leq i \leq L$, and then use these to induce an autoregressive distribution:
\begin{align}
    &q^\Gamma(a_{1:i}) = \sum_{a_{i+1:L} \in \V^{L-i}} q^\Gamma(a_{1:L}) \quad \forall a_{1:i} \in \V^{\leq L},\\
    &q^\Gamma(a_{k+1:i}|a_{1:k}) = \frac{q^\Gamma(a_{1:i})}{q^\Gamma(a_{1:k})} \quad \forall a_{1:i} \in \V^{\leq L}.
    \label{eqn:auto}
\end{align}
Not all distributions can be realized as the skewed draft distribution of a path selection rule, given a fixed draft $q$ and path count $K$. In fact, as in \cref{lem:q-gamma}, the realizable distributions are those which satisfy a submodular inequality like that in \cref{thm:multi-lp}. For a proof, see \cref{appendix:q-gamma}.
\begin{lemma}
    Fix $K$, a draft distribution $q$, and a distribution $r$. Then $r=q^\Gamma$ for some randomized path selection rule $\Gamma$ over $K$ drafts sampled i.i.d. from $q$ if and only if the following holds over all antichains $T \in \mathcal{A}(\V^{\leq L})$:
    \begin{equation}
        \sum_{a_{1:i} \in T} r(a_{1:L}) \leq 1-\left(1-\sum_{a_{1:i}\in T} q(a_{1:i})\right)^K.
    \end{equation}
    \label{lem:q-gamma}
\end{lemma}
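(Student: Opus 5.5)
We prove the two directions separately. The idea is to view a path selection rule as a transport plan from the joint law of the $K$ drafts to the law of the selected path, and then apply a Hall-type (supply–demand) feasibility theorem. Throughout we read the left-hand side of the inequality as $\sum_{a_{1:i}\in T} r(a_{1:i})$, with $r$ extended to prefixes as in \cref{eqn:auto}. For an antichain $T$, write $C(T)\subseteq \V^L$ for the set of full paths having some element of $T$ as a prefix. Because no element of $T$ is a prefix of another, the cylinders of distinct elements of $T$ are disjoint. Hence $r(C(T))=\sum_{a_{1:i}\in T} r(a_{1:i})$ and $q(C(T))=\sum_{a_{1:i}\in T}q(a_{1:i})$.

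\emph{Necessity.} Suppose $r=q^\Gamma$. The selected path is always one of the drafts $X^{(1)}_{1:L},\ldots,X^{(K)}_{1:L}$. So the event that it lies in $C(T)$ is contained in the event that at least one draft lies in $C(T)$. The drafts are i.i.d.\ from $q_L$, so the second event has probability $1-(1-q(C(T)))^K$. By the disjointness remark above, this is exactly the claimed inequality.

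\emph{Sufficiency.} It suffices to use the constraints for antichains $T\subseteq \V^L$, which are simply arbitrary sets $U$ of full-length paths. We set up a finite bipartite transport problem, which is finite because $\V$ is finite.
\begin{itemize}
\item The supply side is the set of $K$-tuples $x=(x^{(1)},\ldots,x^{(K)})\in(\V^L)^K$, with supply $\prod_k q_L(x^{(k)})$.
\item The demand side is $\V^L$, with demand $r(\cdot)$.
\item An edge joins $x$ to $a$ exactly when $a\in S(x):=\{x^{(1)},\ldots,x^{(K)}\}$.
\end{itemize}
A feasible flow that exhausts every supply and meets every demand is precisely a randomized path selection rule $\Gamma$ with $q^\Gamma=r$: given the drafts $x$, select $a$ with probability equal to the flow on edge $(x,a)$ divided by the supply of $x$.

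By the supply–demand theorem (Gale's theorem, equivalently max-flow/min-cut on the source–tuples–paths–sink network), such a flow exists iff two conditions hold. First, total demand equals total supply, which holds since both $r$ and the product measure are probability distributions. Second, every $U\subseteq\V^L$ satisfies $r(U)\le \mathbb{P}(S(X)\cap U\neq\emptyset)$. The right-hand side equals $1-(1-q(U))^K$ by independence of the drafts, which is exactly the hypothesis with $T=U$.

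The main point requiring care is this last step. We need to check that the min-cut in the network reduces exactly to the Hall-type condition on demand-side subsets $U$. Consider a cut that leaves a set $U$ of paths on the sink side. Its cheapest completion cuts the supplies of all tuples $x$ with $S(x)\cap U\neq\emptyset$ (rather than infinite-capacity middle edges) and the demands of paths outside $U$. Its capacity is therefore $\mathbb{P}(S(X)\cap U\ne\emptyset)+1-r(U)$. This is at least $1$ for every $U$ iff the stated inequalities hold, so the maximum flow is $1$ and the flow is feasible. Combined with necessity, this proves the lemma. It also shows that the constraints for antichains containing shorter prefixes are implied by those for subsets of $\V^L$, via $C(T)$.
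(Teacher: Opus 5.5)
Your proof is correct and takes essentially the same route as the paper: the same bipartite transport problem from $K$-tuples of drafts to single paths, Gale's supply--demand theorem with the Hall condition on subsets $U\subseteq\V^L$ evaluated through the complement $(\V^L\setminus U)^K$, and the cylinder-set reduction from antichains to subsets of $\V^L$. Your direct probabilistic necessity argument and explicit min-cut check only spell out what the paper gets from the two-sided equivalence in its bipartite feasibility lemma.
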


This shows that \cref{lem:canon-decomp} corresponds to \textbf{linearizing} the submodular constraints in \cref{thm:multi-lp}. Indeed, consider replacing the \textit{submodular} upper bounds in the multi-path LP with the \textit{additive (modular)} upper bounds $\sum_{a_{1:i} \in T} q^\Gamma(a_{1:i})$. By the upper bound in \cref{lem:q-gamma}, any solution to this \textbf{linearized LP} will satisfy the original multi-path LP. In fact, this linearized LP is precisely the single-path LP\footnote{See \cref{appendix:single-lp} for an explanation of how the sums reduce to pointwise inequalities like in \cref{thm:single-lp}.}where the draft distribution is the skewed draft $q^\Gamma$ rather than the original $q$. This corresponds to running valid single-path verification with target $p$ and draft $q^\Gamma$, as in \cref{lem:canon-decomp}.

For a given $q^\Gamma$, using the result of \cref{thm:bv-opt}, the optimal valid single-path verification algorithm to run here is precisely block verification. This leads to an explicit description of the optimal valid multi-path verification algorithm in \cref{thm:opt-multipath}, which we prove in \cref{appendix:optimal-mp}. 

This optimization problem is intractable to solve exactly, since it requires knowledge of the full joint and target distributions. Furthermore, even if the optimal $q^\Gamma$ can be computed, there is no guarantee that we can compute an efficient path selection rule $\Gamma$ to induce this skewed draft distribution.

\begin{theorem}
    The optimal valid multi-path verification algorithm randomly chooses one of the $K$ i.i.d. blocks with $\Gamma$ and then runs single-path block verification on that path with target values $p$ and draft values $q^\Gamma$, for some path selection rule $\Gamma$. Furthermore, the optimal choice of $\Gamma$ can be determined by solving the following optimization:
    \begin{align}
    \max &\sum_{a_{1:i} \in \V^{\leq L}} \min_{0 \leq k \leq i} p(a_{k+1:i}|a_{1:k})q^\Gamma(a_{1:k}) \\
    \text{s.t.}\quad& 1 - \sum_{a_{1:L} \in T} q^\Gamma(a_{1:L}) \geq \\
    & \quad \; \left(1-\sum_{a_{1:L} \in T} q(a_{1:L})\right)^K \quad \forall T \subseteq \V^L.
    \end{align}
    \label{thm:opt-multipath}
\end{theorem}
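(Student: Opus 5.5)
We will argue in three steps: decompose an arbitrary valid $K$-path algorithm, optimize each piece, and then recast the remaining outer optimization over $\Gamma$.

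\textbf{Step 1 (decomposition).} By \cref{lem:canon-decomp}, any valid $K$-path algorithm $\Phi$ is equivalent to two stages. First a path selection rule $\Gamma$ picks one of the $K$ drafts, inducing a skewed draft $q^\Gamma$. Then a valid single-path algorithm $\Phi'$ runs with target $p$ and draft $q^\Gamma$. The prefix output of $\Phi$ is exactly the prefix output of $\Phi'$, so both have the same block efficiency $\mathbb{E}[\tau+1]$. For fixed $\Gamma$, \cref{thm:bv-opt} applies with $q^\Gamma$ in place of $q$; the statement is valid for arbitrary autoregressive draft distributions, and \cref{eqn:auto} makes $q^\Gamma$ one. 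It says block verification is the best choice of $\Phi'$. Hence every valid algorithm is dominated by some ``$\Gamma$ then BV with draft $q^\Gamma$''. Conversely, each such composite is valid: target-matching holds because single-path BV is valid for any draft, and prefix-matching holds because we output a prefix of a sampled path. This gives the first sentence of the theorem.

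\textbf{Step 2 (value of BV).} We compute the block efficiency of BV with draft $r=q^\Gamma$ as an explicit function of $r$. Using \cref{thm:single-lp} with draft $r$, we maximize $\sum_{a_{1:i}} D(a_{1:i})p(a_{1:i})$ subject to three conditions:
\begin{itemize}
\item $D$ is nonincreasing along paths;
\item $D(\emptyset)=1$;
\item $D(a_{1:i})p(a_{1:i})\le r(a_{1:i})$.
\end{itemize}
These constraints decouple: each node's value is bounded only by constraints at itself and at its ancestors. So the greedy choice is simultaneously optimal for every node:
\begin{equation}
D^*(a_{1:i})=\min\Bigl\{1,\min_{1\le k\le i}\frac{r(a_{1:k})}{p(a_{1:k})}\Bigr\}.
\end{equation}
This is exactly the recursion \cref{eqn:bv-weights} unrolled, and this is what the proof of \cref{thm:bv-opt} establishes. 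Multiplying by $p(a_{1:i})$ and writing $p(a_{1:i})/p(a_{1:k})=p(a_{k+1:i}|a_{1:k})$ gives
\begin{equation}
D^*(a_{1:i})\,p(a_{1:i})=\min_{0\le k\le i} p(a_{k+1:i}|a_{1:k})\,r(a_{1:k}).
\end{equation}
The $k=0$ term is $p(a_{1:i})$ since $r(\emptyset)=1$. Summing over $\V^{\le L}$ yields the stated objective.

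\textbf{Step 3 (feasible set of $q^\Gamma$).} By \cref{lem:q-gamma}, the realizable $r$ are characterized by the antichain inequalities. Rearranged, these say that $1-\sum_T r\ge(1-\sum_T q)^K$ for all antichains $T$. The main obstacle is showing that it suffices to impose this only for $T\subseteq\V^L$, as in the statement. The plan is to map any antichain $T$ to its set of full-length extensions $T'=\{a_{1:L}: \text{some prefix of } a_{1:L}\text{ lies in } T\}$. Because $T$ is an antichain, these extensions are disjoint across elements of $T$. By marginalization \cref{eqn:auto}, $\sum_{a_{1:i}\in T} r(a_{1:i})=\sum_{T'} r(a_{1:L})$, and the same identity holds for $q$. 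So the constraint for $T$ coincides with the constraint for $T'$, and the two constraint families are equivalent. (This also clarifies the $r(a_{1:L})$ versus $r(a_{1:i})$ notation in \cref{lem:q-gamma}.) Maximizing the Step 2 objective over this feasible set therefore gives the optimal $\Gamma$, which completes the proof.
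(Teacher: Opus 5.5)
Your proposal is correct and follows essentially the paper's proof: you decompose via \cref{lem:canon-decomp}, apply \cref{thm:bv-opt} to the skewed draft, identify BV's value through $D_{\Phi_{BV}}p = wq^\Gamma$ and unroll the weight recursion into the min formula (the paper writes out this unrolling as an explicit induction), and take the feasible set from \cref{lem:q-gamma}. The only cosmetic difference is in the constraints: you recover the $T\subseteq\V^L$ form from the antichain form by passing to full-length extensions, whereas the paper directly cites the $\V^L$-subset condition that appears as an intermediate step in its proof of \cref{lem:q-gamma}.
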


 Therefore, we turn to approximation-based schemes. By \cref{lem:lower-bound}, we can obtain a near-optimal solution by making all ratios $q^\Gamma/p$ close to one. For a proof, see \cref{appendix:lower-bound}.

\begin{lemma}
    For a fixed $q^\Gamma$, the objective value in \cref{thm:opt-multipath} is lower bounded by 
    \begin{equation}
        (L+1)\cdot \min_{a_{1:i} \in \V^{\leq L}} \frac{q^\Gamma(a_{1:i})}{p(a_{1:i})}
    \end{equation}
    \label{lem:lower-bound}
\end{lemma}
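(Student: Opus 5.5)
The plan is to bound each summand of the objective in \cref{thm:opt-multipath} from below, term by term. Set
\begin{equation*}
\rho = \min_{b_{1:j} \in \V^{\leq L}} \frac{q^\Gamma(b_{1:j})}{p(b_{1:j})}.
\end{equation*}
For $\rho$ to be meaningful we adopt the convention that sequences with $p(b_{1:j})=0$ are excluded from the minimum. Such sequences contribute nothing to the final sum, since we will bound each summand by a multiple of $p$ at that sequence.

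Fix $a_{1:i}$ and $0\leq k\leq i$. By the autoregressive factorization of the target, $p(a_{k+1:i}|a_{1:k}) = p(a_{1:i})/p(a_{1:k})$ whenever $p(a_{1:k})>0$. Hence the $k$-th candidate in the inner minimum equals
\begin{equation*}
p(a_{1:i})\cdot \frac{q^\Gamma(a_{1:k})}{p(a_{1:k})}.
\end{equation*}
This is at least $\rho\, p(a_{1:i})$. The case $p(a_{1:k})=0$ forces $p(a_{1:i})=0$, and then the summand is trivially nonnegative. The boundary cases are consistent: $k=i$ gives $q^\Gamma(a_{1:i})$ itself, and $k=0$ gives $p(a_{1:i})$ because $q^\Gamma(\emptyset)=1=p(\emptyset)$. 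In particular $\rho\leq 1$, though we do not need this. Taking the minimum over $k$ preserves the bound, so every summand is at least $\rho\,p(a_{1:i})$.

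Summing over $a_{1:i}\in\V^{\leq L}$ and grouping by length $i\in\{0,\ldots,L\}$, each layer contributes $\sum_{a_{1:i}\in\V^i} p(a_{1:i}) = 1$, since $p_i$ is a probability distribution on $\V^i$. The total is therefore at least $(L+1)\rho$, which is the claimed bound.

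The only delicate point is the conditional notation $p(a_{k+1:i}|a_{1:k})$ when some prefix has zero target probability. We handle it with the exclusion convention above, or by noting that such terms do not decrease the sum. The rest is a direct calculation, with no appeal needed to the constraint in \cref{thm:opt-multipath} or to \cref{lem:q-gamma}.
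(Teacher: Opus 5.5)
Your proof is correct and matches the paper's argument: you rewrite each candidate $p(a_{k+1:i}|a_{1:k})\,q^\Gamma(a_{1:k})$ as $p(a_{1:i})\,q^\Gamma(a_{1:k})/p(a_{1:k}) \geq \rho\, p(a_{1:i})$, then sum over the $L+1$ length layers, each of which carries total target mass one. Your handling of zero-probability prefixes is extra care that the paper's proof omits.
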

This shows that when $q^\Gamma=p$ is feasible in \cref{thm:opt-multipath}, we can get an optimal block efficiency of $L+1$, i.e. we always accept a full length-$L$ path and an additional token. However, even determining a rule $\Gamma$ that achieves $q^\Gamma$ near $p$ is difficult. Thus, in the next section, we restrict ourselves to a more tractable class of \textit{greedy} algorithms, where one can both explicitly compute the randomized rule $\Gamma$ and the resulting distribution values $q^\Gamma$. This explicit characterization is necessary to apply \cref{lem:canon-decomp}, because block verification requires exact draft probability values $q^\Gamma$ along the selected path, and it is impossible to exactly sample from $q^\Gamma$ without an explicit randomized rule $\Gamma$.

\section{Greedy Multi-Path Block Verification}
\label{sec:greedy}




We now design an explicit path selection rule $\Gamma$ where the skewed draft distribution $q^\Gamma$ is heuristically close to $p$ and simple to compute, using only on-path probability values. This leads to our main algorithm: \textbf{greedy multi-path block verification (GBV)}. We first define the class of greedy verification algorithms, which contains GBV.

\begin{definition}
    A \textbf{greedy multi-path valid verification algorithm} forms a global ranking on paths in $\V^L$, and always selects the highest-ranked drafted path in the path selection rule $\Gamma$ (\cref{lem:canon-decomp}). 
    \label{def:greedy}
\end{definition}

In the context of the multi-path LP, each greedy multi-path valid verification algorithms can be viewed as the output of a greedy polymatroid algorithm \citep{schrijver2003combinatorial} on the submodular feasibility constraints in \cref{thm:multi-lp}, with the algorithm-selected order being the same as the global ranking. Due to space constraints, we expand on this connection in detail in \cref{appendix:submodular-bv}. 

If the global ordering of paths is $\V^L = \{P_1,\ldots,P_M\}$, then one selects $P_i$ from $\Gamma$ if and only if all $K$ drafted paths lie in $\{P_1,\ldots,P_i\}$, but do not all lie $\{P_1,\ldots,P_{i-1}\}$. Thus, we can explicitly write:
\begin{equation}
    q^\Gamma(P_i) = \left(\sum_{j=1}^i q(P_j)\right)^K - \left(\sum_{j=1}^{i-1} q(P_j)\right)^K.
    \label{eqn:closed-qgamma}
\end{equation}
Again, computing this for arbitrary global orderings may require off-path probability values. However, when paths are ordered by a tree-based rule, only on-path values are needed. Tree-based rules create a global ranking by generating local orderings at all prefixes in $\V^{\leq L}$, and then combine them to induce a lexicographic ordering over all length-$L$ paths.

\begin{definition}
    A global ranking of paths in $\V^L$ is \textbf{tree-based} if the ranking can be obtained as follows. Define an injective function $\pi_{a_{1:i}}:\V \to \mathbb{R}$ at each $a_{1:i} \in \V^{\leq L}$, only using the values $p(\cdot|a_{1:i}),q(\cdot|a_{1:i})$. Then, assign to each path $a_{1:L} \in \V^L$ the $L$-tuple $O(a_{1:L})=(\pi_{a_{1:i-1}}(a_{i}))_{i=1}^{L}$. Finally, rank the paths $a_{1:L} \in \V^L$ in increasing lexicographic order of $O(a_{1:L})$.
\end{definition}

Now, the question remains of how to choose the local orderings $\pi_{a_{1:i}}$. Our key observation is that from \cref{eqn:closed-qgamma}, and the convexity of $X \mapsto X^K$, we get
\begin{equation}
    \frac{q^\Gamma(P_1)}{q(P_1)} \leq \frac{q^\Gamma(P_2)}{q(P_2)} \leq \ldots \leq \frac{q^\Gamma(P_M)}{q(P_M)}.
\end{equation}
That is, $q^\Gamma/q$ increases along the global ranking. Hence, to make $q^\Gamma/p$ heuristically close to one, we would also like $p/q$ to increase along the global ranking. Again, enforcing this strict global requirement is difficult, as it requires knowledge of the full joint $p_L$ and $q_L$. Therefore, we further relax this to a tree-based ranking, by making each local ordering follow $p/q$ in increasing order:
\begin{equation}
    \pi_{a_{1:i}}(x) =\frac{p(x|a_{1:i})}{q(x|a_{1:i})}.
\end{equation}
We denote the path selection rule induced by these $\pi_{a_{1:i}}$ as $\Gamma_g$. Now, we can efficiently compute each $q^{\Gamma_g}(a_{1:i})$ using \cref{eqn:closed-qgamma}. By definition of the lexicographic ordering, the second sum consists of all paths $b_{1:L}$ where for some $0 \leq j \leq i-1$, we have $b_{1:j}=a_{1:j}$ and $\pi_{a_{1:j}}(b_{j+1}) < \pi_{a_{1:j}}(a_{j+1})$. The first sum contains all these paths, as well as all paths $b_{1:L}$ with $b_{1:i}=a_{1:i}$. This leads to the following closed form expression for $q^{\Gamma_g}(a_{1:i})$:
\begin{align}
& \left(q(a_{1:i}) + \sum_{j=0}^{i-1} q(a_{1:j})\quad \quad 
\sum_{\mathclap{\pi_{a_{1:j}}(t) < \pi_{a_{1:j}}(a_{j+1})}} \quad \quad q(t|a_{1:j}) \right)^K \\
&\quad - \left( \sum_{j=0}^{i-1} q(a_{1:j})
\quad \quad \sum_{\mathclap{\pi_{a_{1:j}}(t) < \pi_{a_{1:j}}(a_{j+1})}} \quad \quad q(t|a_{1:j}) \right)^K.
\label{eqn:compute-greedy-q}
\end{align}

In fact, given a fixed $a_{1:i-1}$, we can efficiently compute all $q^{\Gamma_g}(a_{1:i})$ for $a_{i} \in \V$. Using the above expression, these $|\V|$ quantities have the exact same terms except at $q(a_{1:i})$ and the $j=i-1$ term in the summation. We can compute the former over all $a_i \in \V$ in $O(|\V|)$ time by multiplying the conditional distribution $q(\cdot|a_{1:i-1})$ by $q(a_{1:i-1})$. The latter is a sum of $q(t)$ over all $t \in \V$ with $\pi_{a_{1:i-1}}(t)<\pi_{a_{1:i-1}}(a_i)$, so we can use a cumulative sum over the ordering on $\V$ induced by $\pi_{a_{1:i-1}}$. From these $|\V|$ values, we can compute conditional probabilities $q^{\Gamma_g}(\cdot|a_{1:i})$ using \cref{eqn:auto}. 

Finally, now that we have an explicit path selection rule $\Gamma_g$ and corresponding skewed draft $q^{\Gamma_g}$ values, both of which are efficient to compute and only depend on on-path probabilities, we can follow \cref{lem:canon-decomp} and use block verification with $q^{\Gamma_g}$ for the valid single-path verification algorithm. This results in \textbf{greedy multi-path block verification (GBV)}, which we explicitly enumerate in \cref{alg:gmpbv}. In the case $K=1$, this is equivalent to single-path block verification with draft distribution $q$ and target distribution $p$, because $\Gamma_g$ only has one path to select, and thus $q^{\Gamma_g}=q$.



\begin{algorithm}[t]
\caption{Greedy Multi-Path Block Verification}
\label{alg:gmpbv}
\begin{algorithmic}[1]
\INPUT Draft blocks $X^{(1)}_{1:L},\ldots,X^{(K)}_{1:L}$ \\
Target and draft probabilities
$p(\cdot\mid X^{(k)}_{1:i})$, $q(\cdot\mid X^{(k)}_{1:i})$

\FOR{$k=1,\ldots,K$}
    \STATE $O_k \gets \left[p(X^{(k)}_i\mid X^{(k)}_{1:i-1})/q(X^{(k)}_i\mid X^{(k)}_{1:i-1})\right]_{i=1}^{L}$
\ENDFOR
\STATE $k_0 \gets \arg\max_{k} O_k$ under lexicographic ordering
\FOR{$i=1,\ldots,L$}
    \STATE Compute $q^{\Gamma_g}(\cdot\mid X^{(k_0)}_{1:i})$ from $p(\cdot\mid X^{(k)}_{1:i})$, $q(\cdot\mid X^{(k)}_{1:i})$ using \cref{eqn:compute-greedy-q}
\ENDFOR
\STATE Run block verification on $X^{(k_0)}_{1:L}$ with target and draft probabilities
$p(\cdot\mid X^{(k_0)}_{1:i}),q^{\Gamma_g}(\cdot\mid X^{(k_0)}_{1:i})$
\end{algorithmic}
\end{algorithm}

\section{Experiments}
\label{sec:experiments}

We now test GBV for $K=1,2,3,4$ paths\footnote{We release our code \href{https://anonymous.4open.science/r/Greedy_Block_Verification-B800}{here} with implementations of other multi-path methods.}. As mentioned in \cref{sec:greedy}, the case $K=1$ is equivalent to single-path block verification from \citet{sun2024block}, so it is our baseline approach. For all experiments, we select a block length of $L=8$. We evaluate single-batch sampling from two target-draft model pairs: OPT 6.7B/350M and OPT 6.7B/125M \citep{zhang2022opt}. All temperatures are set to 1.0. For each pair, we evaluate our algorithm on three datasets: GSM8K, HumanEval, and MATH500 \citep{chen2021evaluating,hendrycks2021math,cobbe2021training}. We take 500 random problems from the test split of each dataset (HumanEval only has 164). We run our experiments on a Paperspace machine with an A100-80GB and an Intel Xeon Gold 6342 CPU (12 cores, 2.80 GHz). We measure \emph{block efficiency} as the number of generated tokens per call to the target model (higher is better), and \emph{walltime} as average milliseconds per token generated (lower is better).


\begin{table}[htbp]
\centering
\small
\setlength\tabcolsep{3pt}
\begin{tabular}{@{}llccc@{}}
\toprule
\textbf{Model pair} & \textbf{$K$} &
\multicolumn{3}{c}{\textbf{Block efficiency $\uparrow$} (tokens/$M_p$-call)} \\
\cmidrule(lr){3-5}
& & \textbf{GSM8K} & \textbf{HumanEval} & \textbf{MATH500} \\
\midrule
\multirow{4}{*}{OPT 6.7B/350M}
& 1 & 3.294 & 3.157 & 3.227 \\
& 2 & 3.827 & 3.538 & 3.721 \\
& 3 & 3.867 & 3.809 & 3.856 \\
& 4 & \textbf{3.884} & \textbf{3.863} & \textbf{3.896} \\
\midrule
\multirow{4}{*}{OPT 6.7B/125M}
& 1 & 2.937 & 2.653 & 2.814 \\
& 2 & 3.407 & 3.023 & 3.274 \\
& 3 & 3.526 & 3.233 & 3.392 \\
& 4 & \textbf{3.562} & \textbf{3.503} & \textbf{3.493} \\
\bottomrule
\end{tabular}
\caption{We compute the block efficiency (tokens per target model call) for decoding with GBV, for target-draft model pairs OPT 6.7B/350M and OPT 6.7B/125M on up to 500 prompts from each of the datasets GSM8K, HumanEval, and MATH500. Larger numbers are better. In all settings, efficiency improves as $K$ increases from $1$ (block verification baseline) to $4$.}
\label{tab:block-eff}
\end{table}

Our block efficiency results are shown in \cref{tab:block-eff}. Across all model pairs and datasets, increasing the number of draft paths $K$ monotonically improves block efficiency. From $K=1$ to $K=4$, block efficiency gains range from 17.91\% to 32.04\%, with an average 23.08\% gain across all six model pair and dataset combinations. However, we also observe diminishing returns for higher $K$: on average, there is a 14.98\% increase from $K=1$ to $K=2$, a 4.40\% increase from $K=2$ to $K=3$, and a 2.54\% increase from $K=3$ to $K=4$. Thus, while GBV significantly improves block efficiency, the most impactful gains are in the $K=2$ setting.


\begin{table}[htbp]
\centering
\small
\setlength\tabcolsep{3pt}
\begin{tabular}{@{}llccc@{}}
\toprule
\textbf{Model pair} & \textbf{$K$} &
\multicolumn{3}{c}{\textbf{Walltime $\downarrow$} (ms/token)} \\
\cmidrule(lr){3-5}
& & \textbf{GSM8K} & \textbf{HumanEval} & \textbf{MATH500} \\
\midrule

\multirow{4}{*}{OPT 6.7B/350M}
& 1 & 44.482 & 46.955 & 45.851 \\
& 2 & 39.592 & 43.373 & 40.473 \\
& 3 & \textbf{38.862} & \textbf{41.017} & \textbf{38.889} \\
& 4 & 48.473 & 50.656 & 48.440 \\
\midrule
\multirow{4}{*}{OPT 6.7B/125M}
& 1 & 34.479 & 38.824 & 36.022 \\
& 2 & 30.862 & 35.648 & 32.166 \\
& 3 & \textbf{30.109} & \textbf{33.627} & \textbf{31.163} \\
& 4 & 39.831 & 41.297 & 40.329 \\
\bottomrule
\end{tabular}
\caption{We compute the walltimes (ms/token) for decoding with GBV, for target-draft model pairs OPT 6.7B/350M and OPT 6.7B/125M on up to 500 prompts from each of the datasets GSM8K, HumanEval, and MATH500. Smaller numbers are better. In all settings, walltimes improve as $K$ increases from $1$ (block verification baseline) to $3$, but drop off at $K=4$.}
\label{tab:throughput}
\end{table}

While block efficiency improvements are significant, they do not perfectly align with our walltime results in \cref{tab:throughput}. Here, gains are no longer monotonic from $K=1$ to $K=4$. There are significant improvements from $K=1$ to $K=3$, with an average reduction in walltime by 13.34\% across all settings. For OPT 6.7B/350M and MATH500, this even reaches a 15.19\% reduction. However, $K=4$ always performs worse than the baseline $K=1$, with an average increase of 9.39\% in walltime. This is due to the increased computation cost of performing the batched target forward pass over $K=4$ paths (see \cref{subsec:multi-path-background}), which negates block efficiency gains.

We also analyze decoding efficiency across datasets. The relative gains in block efficiency from $K=1$ to $K=4$ are highest for HumanEval (27.20\% average across model pairs), followed by MATH500 (22.43\%) and GSM8K (19.60\%). For walltimes, the best setting $K=3$ reduces walltime relative to the baseline $K=1$ by 12.65\% for GSM8K, 14.34\% for MATH500, and 13.02\% for HumanEval (averaged across model pairs). While HumanEval benefits most from extra paths in block efficiency, these do not directly translate to walltime gains. 

We further examine the impact of target and draft model sizes. Averaged across datasets, the larger draft (350M) achieves 13.72\% higher block efficiency than the smaller draft (125M) for $K=3$. However, the smaller draft achieves a 20.14\% lower average walltime than the larger draft in the same setting. While a larger draft model can improve block efficiency, this comes at the expense of increased latency in draft sequence generation. 

For practical usage, we recommend the setting $L=8,K=3$. This achieves nearly all of the block efficiency gains of $K=4$, without incurring a significant spike in latency due to batched target calls. In all settings, $K=3$ presents the fastest walltime. In multi-batch settings, $K=2$ is also a viable alternative, as it has around a 10\% average walltime reduction compared to block verification. Our results demonstrate that GBV significantly reduces end-to-end latency in autoregressive decoding.

\subsection{Other Model Families and Datasets}
\label{subsec:models}

While $K=4$ has the highest block efficiency in our OPT experiments, this is not universally true. In \cref{appendix:qwen-llama}, we widen our experimental coverage to include model families with much larger target models (Qwen-3 32B/0.6B and Llama-3 70B/8B) and non-academic datasets (MGSM and ToolBench). In all these settings, we find that block verification outperforms GBV with $K>1$ in both block efficiency and walltime. We also provide a systems breakdown of relative draft and target pass costs and the KV-cache footprint to explain cost breakdowns for larger models. These results show how to best deploy GBV. For some families (OPT) $K=3$ is optimal, whereas for others (Qwen-3 and Llama-3) the best choice is to revert to block verification, which is equivalent to GBV with $K=1$. 

\subsection{Temperature Ablations}
\label{subsec:temp}

In \cref{appendix:temperature}, we expand our Qwen-3 32B/0.6B experiments for GSM8K in \cref{appendix:qwen-llama} from temperature 1.0 target sampling to 0.2, 0.4, 0.6, and 0.8. We find that block verification performs worse and GBV with $K>1$ performs better at lower temperatures. At temperature 0.2, GBV with $K=3$ is over 0.6 tokens/s faster than block verification. Thus, even for model families where block verification is better at temperature one, GBV with $K>1$ can perform better than block verification at low temperatures.

\subsection{Draft Length Ablations}
\label{subsec:draft}

In \cref{appendix:L-ablation}, we perform ablations on the draft length $L$. We find that increasing our $L=8$ setting to $L=16,24$ yields modest gains in block efficiency, but throughput degrades rapidly. Following a systems breakdown similar to \cref{appendix:qwen-llama}, we explain this occurs because longer  block lengths push more relative work into the drafting bottleneck. We find the ideal choice of $L$ is somewhere around $8$. 

\subsection{Comparison to Multi-Path Methods}
\label{subsec:multipath}

    Finally, we compare GBV to multi-path verification algorithms SpecTr \citep{sun2023spectr}, SpecInfer \citep{miao2024specinfer}, and traversal verification \citep{weng2025traversal}. The rationale for using these methods along with detailed results and discussion are in \cref{appendix:multi-path-compare}. Traversal verification beats GBV at temperature 1.0 by up to 1.0 tokens/s, GBV beats traversal verification at temperature 0.2 by up to 1.6 tokens/s, and other methods lag behind. In its best setting of $K=2$ for temperature 0.2, we find that GBV achieves nearly 11.5 tokens/s. \textit{No other method even achieves over 10 tokens/s in any setting.} Thus, GBV is the SOTA multi-path verification algorithm at low temperatures.

\section{Conclusion}

We developed a single-path information-agnostic linear program (LP) which encodes feasible speedups for valid verification algorithms in speculative sampling, and showed that block verification remains optimal even with access to off-path probabilities. We further extended our LP to the setting where multiple draft paths are generated. While this multi-path LP is not feasible to solve exactly, by approximating it with a class of greedy verification algorithms, we developed a generalization of block verification, called greedy multi-path block verification. This significantly improves decoding efficiency relative to block verification. Future work could further explore the class of greedy verification schemes, or explore alternative approximation to the multi-path LP.

\section*{Impact Statement}

This paper presents work whose goal is to advance the field of Machine
Learning. There are many potential societal consequences of our work, none
which we feel must be specifically highlighted here.

\bibliography{example_paper}
\bibliographystyle{icml2026}

\newpage
\appendix
\onecolumn
\section{Proof of Multi-Path Information-Agnostic LP}
\label{appendix:multi-lp}

In this section, we prove the multi-path information-agnostic LP in \cref{thm:multi-lp}. Our proof begins with a key lemma that dictates when one can sample from one distribution given data from another, with the conditional distribution (often called the transport) to induce the former  restricted to a given support. In our context, the restricted support represents the prefix-matching requirement, the distribution from which we are given data is induced by i.i.d. sampling paths from the draft model, and the distribution from which we would like to sample is that of the verification algorithm output.

\begin{lemma}[Bipartite Transport Feasibility]
    Let $G=(A \cup B,E \subseteq A \times B)$ be a bipartite graph, with probability distributions $a(\cdot)$ and $b(\cdot)$ over $A$ and $B$, respectively. Then the following conditions are equivalent, where $N(\cdot)$ denotes neighborhoods:
    \begin{enumerate}
        \item There exists a joint distribution $\pi(\cdot,\cdot)$ over $A \times B$ with marginal distributions $a(\cdot)$ and $b(\cdot)$, such that $\pi(x,y)=0$ for all $(x,y) \not \in E$.
        \item For any $S \subseteq A$, we have $\sum_{x \in S} a(x) \leq \sum_{y \in N(S)} b(y)$.
    \end{enumerate}
    \label{lem:bipartite-feasibility}
\end{lemma}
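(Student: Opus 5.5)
The plan is to prove the two directions separately. Necessity is a one-line computation; sufficiency is the supply--demand (Gale/Hall) theorem, which I would derive from max-flow min-cut. Since the paper applies the lemma to the finite sets $\V^{\leq L+1}$ and $(\V^L)^K$, I will assume $A$ and $B$ are finite.

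\emph{Necessity} (1 $\Rightarrow$ 2). Fix $S \subseteq A$. Since $\pi$ vanishes off $E$, every $x \in S$ only sends mass to $N(S)$. Hence
\begin{equation}
\sum_{x\in S} a(x)=\sum_{x\in S}\sum_{y \in N(x)}\pi(x,y)\le \sum_{y\in N(S)}\sum_{x\in A}\pi(x,y)=\sum_{y\in N(S)} b(y).
\end{equation}

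\emph{Sufficiency} (2 $\Rightarrow$ 1). Build a flow network with a source $s$, a sink $t$, and the following arcs:
\begin{itemize}
\item $s\to x$ with capacity $a(x)$ for each $x\in A$;
\item $x\to y$ with capacity $+\infty$ for each $(x,y)\in E$;
\item $y\to t$ with capacity $b(y)$ for each $y\in B$.
\end{itemize}
A flow of value $1$ must saturate every source arc and every sink arc, because both $a$ and $b$ have total mass $1$. Setting $\pi(x,y)$ equal to the flow on arc $x\to y$, and $0$ off $E$, then gives a coupling with marginals $a$ and $b$ supported on $E$. By max-flow min-cut it therefore suffices to show that every $s$--$t$ cut has capacity at least $1$.

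Take a cut $(\{s\}\cup S\cup R,\ \cdot)$ with $S\subseteq A$ and $R\subseteq B$. If some neighbor of $S$ lies outside $R$, an infinite arc crosses the cut and its capacity is $+\infty$. Otherwise $N(S)\subseteq R$, and the capacity is
\begin{equation}
\sum_{x\notin S}a(x)+\sum_{y\in R}b(y)\;\ge\; 1-\sum_{x\in S}a(x)+\sum_{y\in N(S)}b(y)\;\ge\;1,
\end{equation}
where the last step is condition 2. So the minimum cut, and hence the maximum flow, equals $1$.

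The main obstacle is only bookkeeping: checking that the cut with $S=\emptyset$ and $R=\emptyset$ (capacity exactly $1$) is consistent with the bound, and justifying the finiteness assumption. If one wanted countable vocabularies, the plan would be to truncate to finite sets and pass to a limit by compactness of the coupling polytope. Alternatively, strong LP duality applied directly to the transportation polytope gives the same dual certificate as the cut argument.
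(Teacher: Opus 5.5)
Your proof is correct, and it differs from the paper's mainly in what it treats as a black box. The paper rewrites condition 1 as feasibility of a transportation LP with the cells outside $E$ removed. It then cites Gale's supply--demand feasibility theorem for bipartite networks, which gives both directions at once; the extra equal-total-mass condition in Gale's theorem holds automatically because $a$ and $b$ are probability distributions.

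You instead prove necessity by direct summation, and you derive sufficiency from max-flow min-cut on the network $s \to A \to B \to t$ with infinite-capacity middle arcs. This is essentially a self-contained proof of the bipartite case of Gale's theorem. It also makes explicit where the equal masses are used: a flow of value $1$ must saturate both the source arcs and the sink arcs.

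The paper's route is shorter. Yours costs a cut computation but relies only on max-flow min-cut (or LP duality). Both tacitly need $A$ and $B$ finite, which holds in every application in the paper, so your remark about countable vocabularies is not needed.
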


\begin{proof}
    Condition 1 is equivalent to the feasibility of the following LP in variables $\pi_{x,y} \geq 0$:
    \begin{equation}
        \sum_{x \in A} \pi_{x,y} = b(y) \quad \forall y \in B, \quad \sum_{y \in B} \pi_{x,y} = a(x) \quad \forall x \in A, \quad \pi_{x,y} = 0 \quad \forall (x,y)\not \in E.
    \end{equation}
    We can incorporate the zero equality condition into the sum equalities to turn this into:
    \begin{equation}
        \sum_{x \in N(y)} \pi_{x,y} = b(y) \quad \forall y \in B, \quad \sum_{y \in N(x)} \pi_{x,y} = a(x) \quad \forall x \in A.
    \end{equation}
    Now, Gale's feasibility theorem for bipartite supply-demand networks \citep{gale1957theorem} implies this LP is feasible in nonnegative variables if and only if
    \begin{equation}
        \sum_{x \in S} a(x) \leq \sum_{y \in N(S)} b(x) \quad \forall S \subseteq A, \quad \sum_{x \in A} a(x) = \sum_{y \in B} b(x).
    \end{equation}
    The equality condition is automatically satisfied, since $a(\cdot),b(\cdot)$ are probability distributions, so both sides become one. Thus, Condition 1 is equivalent to Condition 2.
\end{proof}

Using this, we can prove the following lemma, which dictates what distributions are feasible for a $K$-path verification algorithm $\Phi$ which satisfies prefix-matching. We do not enforce target-matching yet, so this algorithm is not necessarily valid. For any $K$-path verification algorithm $\Phi$, denote $R_\Phi(\cdot)$ as the distribution of the output of $\Phi$, which is supported over $\V^{\leq L+1} \setminus \V^0$, i.e. all nonempty paths of length $\leq L+1$. We need the nonempty requirement as we must output at least one token. As we prove in \cref{lem:prefix}, a collection of subset-sum inequalities over a particular bipartite graph exactly describes the set of $R_\Phi$ which can be realized by some $\Phi$ satisfying prefix-matching.

\begin{lemma}[Prefix-Matching]
    Construct the bipartite graph $G$ with left vertices $V^{\leq L+1} \setminus \V^0$, right vertices $(\V^L)^K$, and edges $E$ consisting of all pairs \begin{equation}
        \left(a_{1:i},\left(a^{(1)}_{1:L},\ldots,a^{(K)}_{1:L}\right)\right) \in \left(\V^{\leq L+1}\setminus \V^0\right) \times \left(\V^L\right)^K, \quad a_{1:i-1} \in \left\{a^{(1)}_{1:i-1},\ldots,a^{(K)}_{1:i-1}\right\}.
    \end{equation} The values $R_\Phi(a_{1:i})$ for $a_{1:i} \in \V^{\leq L+1}\setminus \V^0$ are feasible for a $K$-path verification algorithm $\Phi$ that satisfies prefix-matching if and only if they are nonnegative, sum to one, and 
    \begin{equation}
        \sum_{a_{1:i} \in S} R_\Phi(a_{1:i}) \leq \sum_{\left(a^{(1)}_{1:L},\ldots,a^{(K)}_{1:L}\right) \in N(S)}  \prod_{k=1}^K q\left(a^{(k)}_{1:L}\right) \quad \forall S \subseteq V^{\leq L+1} \setminus \V^0,
    \end{equation}
    where $N(\cdot)$ denotes a neighborhood in $G$.
    \label{lem:prefix}
\end{lemma}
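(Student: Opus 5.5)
The plan is to reduce the statement directly to \cref{lem:bipartite-feasibility}, applied to the bipartite graph $G$ constructed in \cref{lem:prefix}. On the right side $B=(\V^L)^K$ we place the input distribution $b(a^{(1)}_{1:L},\ldots,a^{(K)}_{1:L})=\prod_{k=1}^K q(a^{(k)}_{1:L})$, which is the law of the $K$ i.i.d.\ drafts. On the left side $A=\V^{\leq L+1}\setminus\V^0$ we place the candidate output distribution $a=R_\Phi$. Since $p_L,q_L$ are fixed inputs, a $K$-path verification algorithm $\Phi$ is nothing more than a Markov kernel from the realized draft tuple (plus internal randomness) to an output sequence. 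So it is equivalent to a joint distribution $\pi$ on $A\times B$ whose $B$-marginal is $b$ and whose $A$-marginal is $R_\Phi$.

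The first step is to check that prefix-matching is exactly the support condition $\pi(x,y)=0$ for $(x,y)\notin E$. An output $a_{1:i}$ with $i\ge 1$ is of the form $(X_{1:\tau},Y)$ with $\tau=i-1$ and $X_{1:\tau}$ a prefix of some draft block. This holds if and only if $a_{1:i-1}\in\{a^{(1)}_{1:i-1},\ldots,a^{(K)}_{1:i-1}\}$, which is precisely the edge condition. There are two small points to verify here.
\begin{itemize}
\item When $i=L+1$, $a_{1:L}$ must equal one of the full blocks.
\item When $i=1$, the condition is vacuous, so every tuple is adjacent to every single-token output.
\end{itemize}
We should also note the role of the support of $b$: for draft tuples with $b=0$, the edges are irrelevant, and the neighborhood sums in condition 2 of \cref{lem:bipartite-feasibility} automatically ignore them.

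For the forward direction ($\Rightarrow$), a prefix-matching $\Phi$ induces such a $\pi$ by taking the joint law of (drafts, output). Nonnegativity and summing to one are immediate, because $R_\Phi$ is a distribution on nonempty outputs. The Hall inequalities then follow from condition 1 $\Rightarrow$ condition 2 of \cref{lem:bipartite-feasibility}.

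For the reverse direction ($\Leftarrow$), given nonnegative $R_\Phi$ summing to one and satisfying the inequalities, \cref{lem:bipartite-feasibility} yields a coupling $\pi$ supported on $E$. We then define $\Phi$ as follows: on observing drafts $y$ with $b(y)>0$, output $x$ with probability $\pi(x,y)/b(y)$. This is a legitimate algorithm because $\pi$ depends only on $q_L$ (and $R_\Phi$, which may depend on $p_L,q_L$), and these are available to $\Phi$. By construction its output law is $R_\Phi$, and it satisfies prefix-matching almost surely.

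The main obstacle is formalizing the identification between ``algorithms satisfying prefix-matching'' and ``couplings supported on $E$.'' Two issues need care: the measure-zero draft tuples, and the fact that prefix-matching only needs to hold almost surely. Everything else is a direct invocation of \cref{lem:bipartite-feasibility}.
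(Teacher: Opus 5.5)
Your proposal is correct and follows essentially the same route as the paper. You identify a prefix-matching $\Phi$ with a coupling, supported on $E$, of the i.i.d.\ draft law $\prod_k q(a^{(k)}_{1:L})$ on $(\V^L)^K$ and $R_\Phi$ on $\V^{\leq L+1}\setminus\V^0$, then apply the bipartite transport feasibility lemma in both directions. Your explicit kernel $\pi(x,y)/b(y)$ for the converse, and your remark on zero-probability draft tuples (which can be sent to any prefix-matching output, e.g.\ a single token), simply spell out details the paper's short proof leaves implicit.
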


\begin{proof}
    The sum to one equality and nonnegative requirements are equivalent to requiring that $R_\Phi$ is a valid probability distribution. Now, given that $R_\Phi$ is a valid probability distribution, we observe that the prefix-matching requirement is equivalent to the following: given $K$ i.i.d. sampled paths 
    \begin{equation}
        X^{(1)}_{1:L},\ldots,X^{(K)}_{1:L} \sim q(\cdot),
    \end{equation}
    there is a conditional distribution (randomized rule) $\pi(\cdot|X^{(1)}_{1:L},\ldots,X^{(K)}_{1:L})$ over $\V^{\leq L+1} \setminus \V^0$ which is supported only over prefixes of these paths plus an additional token, such that sampling from $\pi$ results in the distribution $R_\Phi$ . Given the construction of the graph $G$, we can alternatively express the restricted support requirement on $\pi(\cdot|\cdot)$ as $\pi(v_l|v_r)=0$ for all left vertices $v_l$ and right vertices $v_r$ with $(v_l,v_r) \not \in E$. Thus, by turning this conditional distribution into a joint distribution, as $R_\Phi$ is a distribution over left vertices and i.i.d. sampling from $q(\cdot)$ is a distribution over right vertices, \cref{lem:bipartite-feasibility} directly applies to show that prefix-matching is equivalent to 
    \begin{equation}
        \sum_{a_{1:i} \in S} R_\Phi(a_{1:i}) \leq \sum_{\left(a^{(1)}_{1:L},\ldots,a^{(K)}_{1:L}\right) \in N(S)}  \prod_{k=1}^K q\left(a^{(k)}_{1:L}\right) \quad \forall S \subseteq V^{\leq L+1} \setminus \V^0.
    \end{equation}
    The product term is the probability of sampling $K$ paths i.i.d from $q$. This completes the proof.
\end{proof}

Now, we move onto the target-matching requirement. Again, we can establish a necessary and sufficient condition for a $K$-path verification algorithm $\Phi$ to satisfy target-matching terms of the $R_\Phi$ values. Now, the condition is path-wise summation requirement. 

\begin{lemma}[Target-Matching]
    The values $R_\Phi(a_{1:i})$ for $a_{1:i} \in \V^{\leq L+1}\setminus \V^0$ are feasible for a $K$-path verification algorithm $\Phi$ that satisfies prefix-matching if and only if they are nonnegative, sum to one, and
    \begin{equation}
        \sum_{i=0}^{L} \frac{R_\Phi(a_{1:i+1})}{p(a_{1:i+1})} = 1\quad \forall a_{1:L+1} \in \V^{L+1}.
    \end{equation}
    \label{lem:target}
\end{lemma}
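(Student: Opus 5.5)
The plan is to compute the distribution of the extended output $(X_{1:\tau},Y,Z)$ directly in terms of $R_\Phi$ and compare it with $p_{L+1}$ path by path. Prefix-matching and target-matching are separate conditions on $\Phi$. Target-matching depends on $\Phi$ only through the output distribution $R_\Phi$, since $Z$ is drawn from $p$ conditioned only on the output sequence. So I will read the statement as characterizing, among distributions $R_\Phi$ on $\V^{\leq L+1}\setminus\V^0$, exactly those for which target-matching holds; prefix-matching is already characterized separately by \cref{lem:prefix}. Nonnegativity and summing to one just say that $R_\Phi$ is a probability distribution, as in the proof of \cref{lem:prefix}. The real content is therefore the equivalence between target-matching and the displayed path-wise identity.

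First, fix a full sequence $a_{1:L+1}\in\V^{L+1}$. The event $(X_{1:\tau},Y,Z)=a_{1:L+1}$ is a disjoint union over $i\in\{0,\ldots,L\}$ of the events ``the output equals $a_{1:i+1}$ and the appended $Z$ equals $a_{i+2:L+1}$''. Since $Z\sim p_{L-\tau}(\cdot\mid X_{1:\tau},Y)$ is drawn conditionally on the output alone, each event has probability $R_\Phi(a_{1:i+1})\,p(a_{i+2:L+1}\mid a_{1:i+1})$. By the autoregressive factorization, whenever $p(a_{1:i+1})>0$ this equals
\begin{equation}
R_\Phi(a_{1:i+1})\,\frac{p(a_{1:L+1})}{p(a_{1:i+1})}.
\end{equation}
Summing over $i$ gives
\begin{equation}
\mathbb{P}\bigl((X_{1:\tau},Y,Z)=a_{1:L+1}\bigr)=p(a_{1:L+1})\sum_{i=0}^{L}\frac{R_\Phi(a_{1:i+1})}{p(a_{1:i+1})}.
\end{equation}
Requiring this to equal $p(a_{1:L+1})$ for every $a_{1:L+1}$ is precisely target-matching. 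When $p(a_{1:L+1})>0$, dividing by it yields the displayed identity, which proves both directions at once for such paths.

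The main obstacle is the degenerate case of zero target probability, where the ratios are undefined. I would handle it with the convention that $R_\Phi(a_{1:i+1})/p(a_{1:i+1})$ is $0$ when both numerator and denominator are $0$, and $+\infty$ when only the denominator is $0$. With this convention the identity forces $R_\Phi$ to vanish on sequences of zero target probability. This is necessary for target-matching: summing the target-matching equation over all extensions of such a sequence shows that its output mass must be $0$. Once those zero-mass outputs are removed, only paths with $p(a_{1:L+1})>0$ constrain anything. Two further points complete the argument. Paths with $p(a_{1:L+1})=0$ whose prefixes carry positive output mass would give a strict inequality, and this is excluded by the same vanishing argument applied at the first zero-probability prefix. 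Finally, summing the path-wise identity against $p_{L+1}$ recovers $\sum R_\Phi=1$, which confirms that the normalization condition is consistent with the identity.
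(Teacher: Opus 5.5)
Your core computation is the paper's proof. You split the event $(X_{1:\tau},Y,Z)=a_{1:L+1}$ by output length, use that $Z$ depends only on the output to get $\sum_{i=0}^{L} R_\Phi(a_{1:i+1})\,p(a_{1:L+1})/p(a_{1:i+1})$, and divide by $p(a_{1:L+1})$. Reading the lemma as a characterization of target-matching is also what that proof actually establishes. The paper divides without comment, so it tacitly assumes all target probabilities are positive; under that assumption your argument is complete.

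The zero-probability patch you add on top is wrong, however. Use your convention ($0/0=0$, $x/0=+\infty$) and take a path $a_{1:L+1}$ whose first zero-probability prefix is $a_{1:j}$. After your (correct) vanishing step, the identity at this path reads $\sum_{m=1}^{j-1}R_\Phi(a_{1:m})/p(a_{1:m})=1$, i.e.\ $D_\Phi(a_{1:j-1})=0$. Target-matching only gives $\le 1$ here, from any positive-probability extension of $a_{1:j-1}$.

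The step claiming this ``strict inequality'' is excluded by the vanishing argument is where it fails. Vanishing removes output mass only at and beyond $a_{1:j}$, not on the positive-probability prefixes $a_{1:1},\dots,a_{1:j-1}$. If some first token has $p(a_1)=0$ (the case $j=1$), the identity at any path through it even reads $0=1$, so no valid algorithm satisfies it.

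A concrete instance with $j=2$: take $L=1$, $p(x)=p(y)=\tfrac12$, $p(x\mid x)=1$. Speculative decoding with $q=p$ is valid and always outputs two tokens, so $R_\Phi(x)=0$ and $R_\Phi(x,y)=0$. The identity at $(x,y)$ then evaluates to $0\neq 1$, so the ``only if'' direction fails.

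No fixed value of $0/0$ rescues the literal statement. The valid one-token sampler $R_\Phi(x)=R_\Phi(y)=\tfrac12$ needs $0/0=0$ at $(x,y)$, while the previous example needs $0/0=1$.

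To handle zeros correctly, you have three options:
\begin{itemize}
\item require the identity only for paths with $p(a_{1:L+1})>0$, together with $R_\Phi(a_{1:i})=0$ whenever $p(a_{1:i})=0$;
\item use the undivided condition $\sum_{i=0}^{L}R_\Phi(a_{1:i+1})\,p(a_{i+2:L+1}\mid a_{1:i+1})=p(a_{1:L+1})$ for all paths;
\item state the full-support assumption explicitly.
\end{itemize}
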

\begin{proof}
    Target-matching asserts that sampling from $p_{L+1}(\cdot)$ is equivalent to sampling the output $(X_{1:\tau},Y) \sim R_\Phi$ of the verification algorithm and then sampling an additional $L-\tau$ tokens from $Z_{1:L-\tau}\sim p_{L-\tau}(\cdot|X_{1:\tau},Y)$. For a given path $a_{1:L+1} \in \V^{L+1}$, the probability of sampling this path from the former method is $p_{L+1}(a_{1:L+1})$. The probability of sampling it from the latter method is
    \begin{align}
         &\mathbb{P}(X_{1:\tau}=a_{1:\tau},Y=a_{\tau+1},Z_{1:L-\tau}=a_{\tau+2:L+1}) \\ &= \sum_{i=0}^L  \mathbb{P}(\tau=i,X_{1:\tau}=a_{1:\tau},Y=a_{\tau+1},Z_{1:L-\tau}=a_{\tau+2:L+1}) \\
          &= \sum_{i=0}^L  \mathbb{P}(\tau=i,X_{1:\tau}=a_{1:i},Y=a_{i+1},Z_{1:L-\tau}=a_{i+2:L+1}) \\
         &= \sum_{i=0}^L  \mathbb{P}((X_{1:\tau},Y)=a_{1:i+1})p(a_{i+2:L+1}|a_{1:i+1}) \\
         &=\sum_{i=0}^L R_\Phi(a_{1:i+1})\cdot \frac{p(a_{1:L+1})}{p(a_{1:i+1})}.
    \end{align}
    Setting these two equal for all paths $a_{1:L+1}$ and dividing out the $p(a_{1:L+1})=p_{L+1}(a_{1:L+1})$ term gives the desired condition.
\end{proof}

By combining the prefix-matching and target-matching lemmas, we can now write down the multi-path LP, in a form which is slightly more complicated than in \cref{thm:multi-lp}, and is in terms of variables $R_\Phi$ rather than the node budgets $D_\Phi$. We will use the notion of a parent set and minimal set.

\begin{definition}
    For each $S \subseteq V^{\leq L+1} \setminus \V^0$, the \textbf{parent set} $\pa(S) \subseteq \V^{\leq L}$ of $S$ consists of all $a_{1:i} \in \V^{\leq L}$ where $a_{1:i+1} \in S$ for some $a_{i+1} \in \V$. For each $S \subseteq V^{\leq L}$, the \textbf{minimal set} $\min(S) \subseteq S$ of $S$ consists of all $a_{1:i} \in S$ where no proper prefix of $a_{1:i}$ appears in $S$, i.e. $a_{1:0},\ldots,a_{1:i-1} \not \in S$.
\end{definition}

Note that an antichain (\cref{def:antichain}) is precisely a subset $S \subseteq \V^{\leq L}$ where $\min(S) = S$, and any minimal set $\min(S)$ is also an antichain because $\min(\min(S))=\min(S)$. We will use parent and minimal sets to remove redundant inequalities in the \cref{lem:prefix}.

\begin{lemma}[Unsimplified Multi-Path LP]
    The values $R_\Phi(a_{1:i})$ for $a_{1:i} \in \V^{\leq L+1}\setminus \V^0$ are feasible for a valid $K$-path verification algorithm $\Phi$ if and only if they are nonnegative, sum to one, and are feasible in
    \begin{align}
        \max & \sum_{i=1}^{L+1} \sum_{a_{1:i} \in \V^{i}} i R_\Phi(a_{1:i}),\\
        \text{s.t.}\quad & \sum_{i=0}^{L} \frac{R_\Phi(a_{1:i+1})}{p(a_{1:i+1})} = 1\quad \forall a_{1:L+1} \in \V^{L+1},\\
        & \sum_{a_{1:i} \in T} \sum_{j=i+1}^{L+1} \sum_{a_{i+1:j} \in \V^{j-i}} R_\Phi(a_{1:j})  \leq 1 - \left(1 - \sum_{a_{1:i} \in T} q(a_{1:i})\right)^K \quad \forall T \in \mathcal{A}(\V^{\leq L}).
    \end{align}
    Furthermore, the objective value is precisely the block efficiency $\mathbb{E}[\tau+1]$ for $\Phi$.
\end{lemma}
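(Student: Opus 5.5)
The proof combines \cref{lem:prefix} and \cref{lem:target}. Validity is exactly prefix-matching plus target-matching. The equality constraint of the LP is the condition of \cref{lem:target}. The objective equals $\mathbb{E}[\tau+1]$ because the output $(X_{1:\tau},Y)$ has length $\tau+1$, and $R_\Phi$ is its distribution, so $\sum_i\sum_{a_{1:i}} iR_\Phi(a_{1:i})=\mathbb{E}[\tau+1]$. The substantive work is showing that the Hall-type family in \cref{lem:prefix}, indexed by arbitrary $S\subseteq \V^{\leq L+1}\setminus\V^0$, is equivalent to the antichain family stated here.

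\textbf{Step 1: compute neighborhoods.} In $G$, a $K$-tuple of drafts is adjacent to $a_{1:i}$ exactly when some draft has prefix $a_{1:i-1}$. So for any $S$, the neighborhood $N(S)$ is the set of tuples for which at least one draft has a prefix in $\pa(S)$. Whether a path has a prefix in $\pa(S)$ is unchanged if we replace $\pa(S)$ by $\min(\pa(S))$. For an antichain $T$, the events ``the draft has prefix $t$'' for $t\in T$ are disjoint, so a single draft has a prefix in $T$ with probability $\sum_{t\in T}q(t)$. Since the $K$ drafts are i.i.d., the right-hand side of \cref{lem:prefix} is therefore $1-(1-\sum_{t\in\min(\pa(S))}q(t))^K$.

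\textbf{Step 2: reduce to antichains.} The right-hand side depends on $S$ only through $T=\min(\pa(S))$, and the left-hand side is monotone in $S$. So for each antichain $T$ the binding inequality comes from the largest $S$ with $\min(\pa(S))=T$, or at least from the largest $S$ whose neighborhood is contained in that of $T$. That largest set is all nonempty sequences $a_{1:j}$ with $a_{1:j-1}$ extending some $t\in T$, i.e. all $a_{1:j}$ with $j\ge i+1$ and $a_{1:i}=t\in T$. Its $R_\Phi$-mass is exactly the triple sum in the LP.

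Both directions then follow. Every constraint from \cref{lem:prefix} is implied by the antichain constraint for $T=\min(\pa(S))$, since that $S$ is contained in the maximal set for $T$. Conversely, every antichain constraint is an instance of \cref{lem:prefix} with $S$ equal to the maximal set, because $\pa$ of the maximal set has minimal set $T$. The case $T=\emptyset$ gives $0\le 0$. Nonnegativity and the sum-to-one condition carry over unchanged.

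The main obstacle is Step 2, and specifically checking that $N(S)$ equals the neighborhood of the maximal set built from $\min(\pa(S))$. One needs every element of $\pa$ of that maximal set to extend an element of $T$, so that no new neighbors appear. One also needs to treat boundary lengths correctly: elements $t\in T$ of length $L$ have children of length $L+1$ only, and the empty sequence may lie in $T$, in which case the inequality reads $1\le 1$ and is trivial.
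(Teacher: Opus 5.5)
Your proposal is correct and follows the paper's proof essentially step for step. You compute $N(S)$ through $\min(\pa(S))$ and use antichain disjointness to get the right-hand side $1-(1-\sum_{t}q(t))^K$, note that the binding inequalities come from enlarging $S$ to all sequences whose parent extends an element of $T$, and read off the equality constraint and the objective from \cref{lem:target} and the definition of $R_\Phi$. Your explicit two-direction check (every $S$ lies in the maximal set for $\min(\pa(S))$, and that maximal set has $\min(\pa(\cdot))=T$) and your boundary cases simply make precise the paper's remark that every antichain arises as some $\min(\pa(S))$.
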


\begin{proof}
First, fix a subset $S \subseteq \V^{\leq L+1} \setminus \V^0$. The neighborhood $N(S) \subseteq (\V^L)^K$ induced by the graph $G$ in \cref{lem:prefix} is the set of $K$-tuples of length-$L$ paths where at least one path has a prefix in $\pa(S)$. So, the complement of $N(S)$ consists of all $K$-tuples of length-$L$ paths where no path has a prefix in $\pa(S)$. Now, note that a path has a prefix in $\pa(S)$ if and only if it has a prefix in $\min(\pa(S))$. Because $\min(\pa(S))$ is an antichain (i.e. no two distinct elements in it are a prefix of the same path), disjointness shows that the probability that a length $L$-path sampled from $q(\cdot)$ has a prefix in $\min(\pa(S))$ is $\sum_{a_{1:i} \in \min(\pa(S))} q(a_{1:i})$. Using these observations, we simplify:
\begin{align}
    \sum_{\left(a^{(1)}_{1:L},\ldots,a^{(K)}_{1:L}\right) \in N(S)}  \prod_{k=1}^K q\left(a^{(k)}_{1:L}\right)
    &= 1 - \sum_{\left(a^{(1)}_{1:L},\ldots,a^{(K)}_{1:L}\right) \not \in N(S)}\prod_{k=1}^K q\left(a^{(k)}_{1:L}\right) \\
    &= 1 - \left(1 - \sum_{a_{1:i} \in \min(\pa(S))} q(a_{1:i})\right)^K.
\end{align}
Thus, the prefix-matching inequalities from \cref{lem:prefix} become:
\begin{equation}
    \sum_{a_{1:i} \in S} R_\Phi(a_{1:i}) \leq 1 - \left(1 - \sum_{a_{1:i} \in \min(\pa(S))} q(a_{1:i})\right)^K \quad \forall S \subseteq V^{\leq L+1} \setminus \V^0.
\end{equation}
We now make the key observation that many of these inequalities are redundant. For any $a_{1:i}$ with a proper (i.e. not full) prefix in $\min(\pa(S))$, we can add $a_{1:i}$ to $S$ without changing $\min(\pa(S))$, keeping the right hand side constant. This also increases the left hand side (since $R_\Phi$ is nonnegative). Thus, the strictest of these inequalities are precisely those where $S$ includes all paths with a proper prefix in $\min(\pa(S))$. For a fixed $T=\min(\pa(S))$, this makes $S$ the set of all $a_{1:j}$ where $a_{1:i} \in T$ for some $i<j$. Since all antichains are achievable as some $\min(\pa(S))$, this turns the above collection of inequalities into 
\begin{equation}
    \sum_{a_{1:i} \in T} \sum_{j=i+1}^{L+1} \sum_{a_{i+1:j} \in \V^{\leq j-i}} R_\Phi(a_{1:j}) \leq 1 - \left(1 - \sum_{a_{1:i} \in T} q(a_{1:i})\right)^K \quad \forall T \in \mathcal{A}(\V^{\leq L}).
\end{equation}
This covers the prefix-matching. The target-matching requirement remains the same as in \cref{lem:target}. Thus, all that remains is to show the objective values is the block efficiency. This holds as $R_\Phi(a_{1:i})$ is the probability that $\Phi$ outputs $a_{1:i}$, and the total number of tokens generated with this output is $i$, so summing over all possible outputs $a_{1:i} \in \V^{\leq L+1} \setminus \V^0$ gives the desired result.
\end{proof}

Finally, we can prove the multi-path LP in \cref{thm:multi-lp}. The only remaining step to derive this is to turn the variables $R_\Phi$ into variables $D_\Phi$ representing node budgets.

\begin{proof}
    By the definition of $D_\Phi$ in \cref{def:node-budgets-sp}, we see that
    \begin{equation}
        D_\Phi(a_{1:i}) = 1-\sum_{j=1}^{i} \frac{R_\Phi(a_{1:i})}{p(a_{1:i})} \quad \forall a_{1:i} \in \V^{\leq L+1}.
    \end{equation}
    The target-matching condition simply implies that each $D_{\Phi}(a_{1:L+1})=0$. Now, observe that
    \begin{equation}
    R_\Phi(a_{1:i}) = D_\Phi(a_{1:i-1})p(a_{1:i}) - D_\Phi(a_{1:i})p(a_{1:i}) \quad \forall a_{1:i} \in \V^{\leq L}.
    \end{equation} for all. Thus, incorporating the nonnegativity constraint on $R_\Phi$, we get
    \begin{equation}
        1 = D_\Phi(a_{1:0}) \geq D_\Phi(a_{1:1}) \geq \ldots \geq D_\Phi(a_{1:L}) \geq D_\Phi(a_{1:L+1})=0\quad \forall a_{1:L+1} \in \V^{L+1}.
    \end{equation}
    Next, observe that for any $a_{1:i} \in \V^{\leq L}$, we have the following telescoping identity:
    \begin{align}
        &\sum_{j=i+1}^{L+1} \sum_{a_{i+1:j} \in \V^{j-i}} R_\Phi(a_{1:j})
        \\ &= \sum_{j=i+1}^{L+1} \sum_{a_{i+1:j} \in \V^{j-i}}  D_\Phi(a_{1:j-1})p(a_{1:j}) - \sum_{j=i+1}^{L+1} \sum_{a_{i+1:j} \in \V^{j-i}}  D_\Phi(a_{1:j})p(a_{1:j}) \\ 
        &= \sum_{j=i}^{L} \sum_{a_{i+1:j+1} \in \V^{j+1-i}}  D_\Phi(a_{1:j})p(a_{1:j+1}) - \sum_{j=i+1}^{L+1} \sum_{a_{i+1:j} \in \V^{j-i}}  D_\Phi(a_{1:j})p(a_{1:j}) \\
        &= D_\Phi(a_{1:i})p(a_{1:i})+\sum_{j=i+1}^{L} \sum_{a_{i+1:j+1} \in \V^{j+1-i}}  D_\Phi(a_{1:j})p(a_{1:j+1}) \\ &\quad - \sum_{j=i+1}^{L} \sum_{a_{i+1:j} \in \V^{j-i}}  D_\Phi(a_{1:j})p(a_{1:j}) - 0\\
        &= D_\Phi(a_{1:i})p(a_{1:i})+\sum_{j=i+1}^{L} D_\Phi(a_{1:j})\left(\sum_{a_{i+1:j+1} \in \V^{j+1-i}} p(a_{1:j+1}) - \sum_{a_{i+1:j} \in \V^{j-i}}  p(a_{1:j})\right) \\
        &= D_\Phi(a_{1:i})p(a_{1:i})+\sum_{j=i+1}^{L} D_\Phi(a_{1:j})\left(p(a_{1:i})-p(a_{1:i})\right)  = D_\Phi(a_{1:i})p(a_{1:i}).
    \end{align}
    This means the condition that all $R_\Phi$ sum to one is automatically satisfied by applying this identity for $i=0$. This also reduces the antichain condition in the unsimplified multi-path LP to
    \begin{equation}
        \sum_{a_{1:i} \in T} D_\Phi(a_{1:i})p(a_{1:i}) \leq 1 - \left(1 - \sum_{a_{1:i} \in T} q(a_{1:i})\right)^K \quad \forall T \in \mathcal{A}(\V^{\leq L}).
    \end{equation}
    This covers all conditions, so all that remains is to show that the objective reduces to the desired expression. This holds by interchanging the order of summation, and using the telescoping identity:
    \begin{align}
        \sum_{i=1}^{L+1} \sum_{a_{1:i} \in \V^{i}} i R_\Phi(a_{1:i}) &= \sum_{i=1}^{L+1} \sum_{j=0}^{i-1} \sum_{a_{1:i} \in \V^{i}} R_\Phi(a_{1:i}) \\
        &=  \sum_{j=0}^{L} \sum_{i=j+1}^{L+1} \sum_{a_{1:i} \in \V^{i}} R_\Phi(a_{1:i})\\
        &= \sum_{j=0}^{L} \sum_{i=j+1}^{L+1} \sum_{a_{1:j} \in \V^{j}} \sum_{a_{j+1:i} \in \V^{j-i}}R_\Phi(a_{1:i}) \\
        &= \sum_{j=0}^{L} \sum_{a_{1:j} \in \V^{j}} \sum_{i=j+1}^{L+1}  \sum_{a_{j+1:i} \in \V^{j-i}}R_\Phi(a_{1:i})= \sum_{a_{1:j} \in \V^{\leq L}} D_\Phi(a_{1:j})p(a_{1:j}).
    \end{align}
    This completes the reduction to the desired form for the multi-path LP.
\end{proof}

\section{Proof of Single-Path Information-Agnostic LP}
\label{appendix:single-lp}

Using the multi-path in \cref{thm:multi-lp}, it is easy to prove the single-path LP in \cref{thm:single-lp} by reducing it to the case $K=1$. This works because a valid single-path verification algorithm is precisely the same as a valid $K$-path verification algorithm in the case $K=1$.

\begin{proof}
It suffices to show that when $K=1$, the multi-path information-agnostic LP from \cref{thm:multi-lp} reduces to the single-path LP. Indeed, the antichain condition becomes
\begin{equation}
\sum_{a_{1:i} \in T} D_\Phi(a_{1:i}) p(a_{1:i}) \leq \sum_{a_{1:i} \in T} q(a_{1:i}) \quad \forall T \in \mathcal{A}(\V^{\leq L}).
\end{equation}
This forces the pointwise conditions $D_\Phi(a_{1:i}) p(a_{1:i}) \leq q(a_{1:i})$, since by taking the singleton antichains $T=\{a_{1:i}\}$. Furthermore, these pointwise conditions imply the antichain conditions through summation. All other conditions and the objective function remain the same. Hence, we obtain the desired single-path LP. 
\end{proof}

\section{Proof of Block Verification Optimality}
\label{appendix:bv-opt}

Using the single-path LP, we can prove that block verification is optimal in the class of valid single-path verification algorithms (\cref{thm:bv-opt}). We will use results from \citet{sun2024block}. The idea is to show that $D_{\Phi_{BV}}$ for the single-path block verification algorithm $\Phi_{BV}$ are tight in the single-path LP.

\begin{proof}
    From Lemma 4 in Appendix B of \citet{sun2024block}, we have for each path $a_{1:i} \in \V^{\leq L}$ that
    \begin{equation}
        \mathbb{P}(\tau \geq i|X_{1:i}=a_{1:i}) = w(a_{1:i})
    \end{equation}
    under $\Phi_{BV}$, where weights $w$ are defined in \cref{eqn:bv-weights}. Now, because $\Phi_{BV}$ is a valid verification algorithm, we can use the telescoping identity from the proof of \cref{thm:multi-lp} at the end of \cref{appendix:multi-lp} (with $R_{\Phi_{BV}}$ defined similarly as the distribution of the output of $\Phi_{BV}$) to show
    \begin{align}
        D_{\Phi_{BV}}(a_{1:i})p(a_{1:i})&= \sum_{j=i+1}^{L+1} \sum_{a_{i+1:j} \in \V^{j-i}} R_{\Phi_{BV}}(a_{1:j})\\
        &= \sum_{j=i+1}^{L+1} \sum_{a_{i+1:j} \in \V^{j-i}} \mathbb{P}(\tau=j-1,X_{1:j-1}=a_{1:j-1},Y=a_j)\\
        &= \sum_{j=i+1}^{L+1} \mathbb{P}(\tau=j-1,X_{1:i}=a_{1:i}) =\mathbb{P}(\tau \geq i,X_{1:i}=a_{1:i}) \\
        &= \mathbb{P}(\tau \geq i|X_{1:i}=a_{1:i})\mathbb{P}(X_{1:i}=a_{1:i}) = w(a_{1:i})q(a_{1:i}). 
    \end{align}
    Thus, to show block verification is optimal, it suffices to show that for any valid single-path verification algorithm $\Phi$ and path $a_{1:i} \in \V^{\leq L}$, we have $D_{\Phi}(a_{1:i})p(a_{1:i}) \leq w(a_{1:i})q(a_{1:i})$, as then the objective value for $\Phi$ (the block efficiency, i.e. the sum of these $D_\Phi p$ terms) cannot exceed the objective value for $\Phi_{BV}$ (the sum of the $D_{\Phi_{BV}}p$ terms) in the single-path LP. The proof is by induction on $i$. For the base case $i=0$, this is clear as $w(a_{1:0})q(a_{1:0})=1=D_{\Phi}(a_{1:0})p(a_{1:0})$. Now, suppose this statement holds for $i$. To prove it for $i+1$, we use the single-path conditions to obtain
    \begin{align}
        D_\Phi(a_{1:i+1})p(a_{1:i+1}) &\leq D_\Phi(a_{1:i}) p(a_{1:i+1}) \\
        &= D_\Phi(a_{1:i}) p(a_{1:i}) p(a_{i+1}|a_{1:i}) \\
        &\leq w(a_{1:i})q(a_{1:i})  p(a_{i+1}|a_{1:i}) \\
        &= q(a_{1:i+1}) \cdot \frac{p(a_{i+1}|a_{1:i})}{q(a_{i+1}|a_{1:i})} \cdot w(a_{1:i})\\
        &\leq q(a_{1:i+1})  \cdot \min\left\{1,\frac{p(a_{i+1}|a_{1:i})}{q(a_{i+1}|a_{1:i})} \cdot w(a_{1:i})\right\} = q(a_{1:i+1} ) w(a_{1:i+1}).
    \end{align}
    The last inequality holds by using the pointwise $D_\Phi p \leq q$ bound from the single-path LP. This completes the induction, and thus the proof that block verification is optimal. 
    
\end{proof}

\section{Proof of Path Selection Decomposition}
\label{appendix:canon-decomp}

In this section, we prove \cref{lem:canon-decomp}, which states that any valid $K$-path verification algorithm can be decomposed into a randomized path selection rule which chooses one of the $K$ paths, and then a valid single-path verification algorithm that operates on an skewed draft distribution. 

\begin{proof}
    Let $\Phi$ denote the valid verification algorithm. This takes in $K$ paths $P_1,\ldots,P_K$ i.i.d. sampled from $q(\cdot)$, and samples a nonempty path in $\V^{\leq L+1}$ using a conditional distribution $\pi(\cdot|P_1,\ldots,P_K)$, which can depend on off-path target and draft distribution values. This is constrained to only output $(X_{1:\tau},Y)$ where $X_{1:\tau}$ is a prefix of one of $P_1,\ldots,P_K$ and $Y \in \V$. Thus, we can conditionally factorize $\pi$ into a prefix selection rule $\pi_1$ on prefixes of $P_1,\ldots,P_K$ conditioned over $P_1,\ldots,P_K$, and an additional token rule $\pi_2$ on $\V$ conditioned over $P_1,\ldots,P_K,X_{1:\tau}$:
    \begin{equation}
        \pi(X_{1:\tau},Y|P_1,\ldots,P_K) = \pi_1(X_{1:\tau}|P_1,\ldots,P_K)\pi_2(Y|P_1,\ldots,P_K,X_{1,\tau}).
    \end{equation}
    For each prefix $a_{1:i}$ of some path in $P_1,\ldots,P_K$, denote its multiplicity $m(a_{1:i})\geq 1$ as the number of paths that contain it as a prefix. For example, $m(\emptyset)=K$. Then, we define a path selection rule $\pi_3$ on $[K]$ conditioned over $P_1,\ldots,P_K$, as:
    \begin{equation}
        \pi_3(k_0|P_1,\ldots,P_K) = \sum_{i=0}^L \pi_1((P_{k_0})_{1:i}|P_1,\ldots,P_K)m((P_{k_0})_{1:i})^{-1}.
    \end{equation}
    To see this is a valid probability distribution, observe that for each distinct $a_{1:i}$ appearing as a prefix of one of $P_1,\ldots,P_K$, when we sum the above formula over all $k_0 \in [K]$, we have exactly $m(a_{1:i})$ terms in the sum above with $(P_{k_0})_{1:i}=a_{1:i}$, each of which are $\pi_1(a_{1:i}|P_1,\ldots,P_K)m(a_{1:i})^{-1}$. Thus, the terms containing $a_{1:i}$ cancel to $\pi_1(a_{1:i}|P_1,\ldots,P_K)$, and summing this over all \textit{distinct} $a_{1:i}$ gives a result of $1$, because $\pi_1$ is a valid distribution. Also, define a prefix selection rule $\pi_4$ on prefixes of $P_{k_0}$ conditioned over $P_1,\ldots,P_K,k_0\in [K]$:
    \begin{equation}
    \pi_4((P_{k_0})_{1:i}|P_1,\ldots,P_K,k_0) = \frac{ \pi_1((P_{k_0})_{1:i}|P_1,\ldots,P_K)m((P_{k_0})_{1:i})^{-1}}{\sum_{i=0}^L \pi_1((P_k)_{1:i}|P_1,\ldots,P_K)m((P_k)_{1:i})^{-1}},
    \end{equation}
    which is also a valid probability distribution. We see that when sampling $k_0 \sim \pi_3(\cdot|P_1,\ldots,P_k)$ and then $X_{1:\tau} \sim \pi_4(\cdot|P_1,\ldots,P_K,k_0)$, we have
    \begin{align}
        &\mathbb{P}(X_{1:\tau}=a_{1:i}|P_1,\ldots,P_K) \\
        &= \sum_{k_0 \in [K]}  \pi_3(k_0|P_1,\ldots,P_k) \pi_4(a_{1:i}|P_1,\ldots,P_K,k_0) \\
        &= \sum_{k_0 \in [K], (P_{k_0})_{1:i}=a_{1:i}}  \pi_3(k_0|P_1,\ldots,P_k) \pi_4(a_{1:i}|P_1,\ldots,P_K,k_0) \\
        &=\sum_{k_0 \in [K],(P_{k_0})_{1:i}=a_{1:i}} \pi_1(a_{1:i}|P_1,\ldots,P_K)m(a_{1:i})^{-1}\\
        &=m(a_{1:i})\pi_1(a_{1:i}|P_1,\ldots,P_K)m(a_{1:i})^{-1}=\pi_1(a_{1:i}|P_1,\ldots,P_K).
    \end{align}

    Thus, sampling from $\pi_3$ and then $\pi_4$ is equivalent to sampling from $\pi_1$, so sampling from $\pi$ is equivalent to sampling in the order $\pi_3,\pi_4,\pi_2$. Now, let $\pi_5$ be the distribution formed by sampling from $\pi_4$ and then $\pi_2$. Then sampling from $\pi$ is equivalent to sampling in the order $\pi_3,\pi_5$, where $\pi_3$ is a path selection rule conditioned on $P_1,\ldots,P_K$, and $\pi_5$ returns $(X_{1:\tau},Y)$ conditioned only the selected path and $P_1,\ldots,P_K$. Now, denote $q^{\pi_3}$ as the true distribution of the path select by $\pi_3$. Then the output of $\Phi$ follows the distribution induced by sampling a \textit{single} path from $q^{\pi_3}$, and sampling from $\pi_5$ conditioned only on that single path (the conditioning over $P_1,\ldots,P_K$ will cancel). To meet target-matching and prefix-matching, $\pi_5$ must correspond to a valid single-path verification algorithm with draft distribution $q^{\pi_3}$ and target distribution $p$. This completes the proof of the decomposition into a path selection rule $\pi_3$ and a valid single-path verification algorithm with draft values being the true distribution of the $\pi_3$-selected path.
\end{proof}

\section{Proof of Skewed Draft Distribution Feasibility}
\label{appendix:q-gamma}

Here, we prove \cref{lem:q-gamma}, which describes when a distribution can be realized as a skewed draft distribution $q^\Gamma$ from a given draft $q$ and path count $K$. The proof is a direct application of \cref{lem:bipartite-feasibility}. 

\begin{proof}
    Define the bipartite graph $G$ with left vertices $(\V^L)^K$, right vertices $\V^L$, and edges $E$ consisting of pairs $((P_1,\ldots,P_K),P_k)$ for all $P_1,\ldots,P_K \in \V^L$ and $k \in [K]$. Our distribution over left vertices is i.i.d. sampling from $q$, and our distribution over right vertices is $r$. We would like to find when there is a joint distribution on $(\V^L)^K \times \V^L$ supported on $E$, whose marginals are the left and right vertex distributions. \cref{lem:bipartite-feasibility} shows such a distribution exists if and only if
    \begin{equation}
        \sum_{P \in S} r(P) \leq \sum_{(P_1,\ldots,P_K) \in N(S)} \prod_{k=1}^K q(P_k) \quad \forall S \subseteq \V^L. 
    \end{equation}
    Here, $N(\cdot)$ denotes neighborhoods in $G$. Thus, $N(S)$ consists of all $(P_1,\ldots,P_K)$ where some $P_k \in S$. This means the complement of $N(S)$ is precisely $(\V^L \setminus S)^K$, so the above becomes
    \begin{equation}
        \sum_{P \in S} r(P) \leq 1 - \left(\sum_{P \in \V^L \setminus S} q(P)\right)^K \quad \forall S \subseteq \V^L. 
    \end{equation}
    We are almost at our desired condition, except we need to show this holds for all antichains $S \in \mathcal{A}(\V^L)$. Indeed, for an antichain $S$, consider the set $P_S \subseteq \V^L$ of length-$L$ paths with a prefix in the antichain. No path in $P_S$ can have two distinct elements in $S$ as a prefix. Thus, the mass of $q$ over $S$ is the same as the mass of $q$ over $P_S$:
    \begin{equation}
        \sum_{a_{1:i} \in S} q(a_{1:i}) = \sum_{a_{1:i} \in S} \sum_{a_{i+1:L} \in \V^{L-i}} q(a_{1:L}) = \sum_{a_{1:L} \in P_S} q(a_{1:L})
    \end{equation}
    The same holds for $r$, because it is also a distribution over $\V^L$. Thus, the antichain conditions reduce to the conditions for $S \subseteq \V^L$ above, as desired.
\end{proof}

\section{Proof of Optimal Multi-Path Algorithm Description}
\label{appendix:optimal-mp}

Now, we prove \cref{thm:opt-multipath}, which describes the optimal multi-path valid verification algorithm as a solution to a nonlinear optimization problem.

\begin{proof}
    The first part of the theorem is a direct consequence of the fact that block verification is the optimal single-path valid verification algorithm (\cref{thm:bv-opt}), and any valid multi-path verification algorithm can be decomposed into path selection and valid single-path verification (\cref{lem:canon-decomp}). Now, we prove the second part of the theorem, using this description. Recall from the proof of \cref{lem:q-gamma} in \cref{appendix:q-gamma} that a distribution $q^\Gamma$ can be realized from a given draft $q$ and path count $K$ with a path selection rule if and only if
    \begin{equation}
    q^\Gamma(a_{1:L}) \leq 1-\left(1-\sum_{a_{1:L} \in T} q(a_{1:L})\right)^K \quad \forall T \subseteq \V^L.
    \end{equation}
    This covers the feasibility condition, so all that remains is to show the objective function in the theorem equals the block efficiency for block verification run on $q^\Gamma$. By using the expression for $D_{\Phi_{BV}}p$ in the proof of \cref{thm:bv-opt} in \cref{appendix:bv-opt}, the objective value in the single-path LP (\cref{thm:single-lp}) simplifies to
    \begin{equation}
        \sum_{a_{1:i} \in \V^{\leq L}} D_{\Phi_{BV}}(a_{1:i})p(a_{1:i}) = \sum_{a_{1:i} \in \V^{\leq L}} w(a_{1:i})q^\Gamma(a_{1:i}),
    \end{equation}
    where the weights $w$ are defined as in \cref{eqn:bv-weights}, but with $q$ replaced by $q^\Gamma$. Thus, all that remains is to show that
    \begin{equation}
        w(a_{1:i})q^\Gamma(a_{1:i}) = \min_{0 \leq k \leq i} p(a_{k+1:i}|a_{1:k})q^\Gamma(a_{1:k}).
    \end{equation}
    The proof is by induction on $i$. For $i=0$, this is clear as both sides are one. Now, suppose this statement holds for $i$. To show it is true for $i+1$, observe that
    \begin{align}
        w(a_{1:i+1})q^\Gamma(a_{1:i+1}) &= \min\left\{1,\frac{p(a_{i+1}|a_{1:i})}{q^\Gamma(a_{i+1}|a_{1:i})} w(a_{1:i})\right\}q^\Gamma(a_{1:i+1}) \\
        &= \min\left\{q^\Gamma(a_{1:i+1}),p(a_{i+1}|a_{1:i})q^\Gamma(a_{1:i}) w(a_{1:i})\right\} \\
        &= \min\left\{q^\Gamma(a_{1:i+1}),p(a_{i+1}|a_{1:i}) \cdot \min_{0 \leq k \leq i} p(a_{k+1:i}|a_{1:k})q^\Gamma(a_{1:k})\right\} \\
        &= \min\left\{q^\Gamma(a_{1:i+1}),\min_{0 \leq k \leq i} p(a_{k+1:i+1}|a_{1:k})q^\Gamma(a_{1:k})\right\} \\
        &= \min_{0 \leq k \leq i+1} p(a_{k+1:i+1}|a_{1:k})q^\Gamma(a_{1:k}).
    \end{align}
    This completes the proof.
\end{proof}

\section{Proof of Optimal Multi-Path Lower Bound}
\label{appendix:lower-bound}

We now prove \cref{lem:lower-bound}, which lower bounds the objective value in \cref{thm:opt-multipath}.

\begin{proof}
    Denote the minimum value of $q^\Gamma/p$ as
    \begin{equation}
        \epsilon = \min_{a_{1:i} \in \V^{\leq L}} \frac{q^\Gamma(a_{1:i})}{p(a_{1:i})}.
    \end{equation}
    Then for each $a_{1:i} \in \V^{\leq L}$ and $0 \leq k \leq i$, we have the lower bound
    \begin{equation}
        p(a_{k+1:i}|a_{1:k})q^\Gamma(a_{1:k}) = \frac{q^\Gamma(a_{1:k})}{p(a_{1:k})} \cdot p(a_{1:i}) \geq \epsilon \cdot p(a_{1:i}).
    \end{equation}
    Thus, we have the desired lower bound on the objective value:
    \begin{align}
        \sum_{a_{1:i} \in \V^{\leq L}} \min_{0 \leq k \leq i} p(a_{k+1:i}|a_{1:k})q^\Gamma(a_{1:k}) &\geq \sum_{a_{1:i} \in \V^{\leq L}}\epsilon \cdot p(a_{1:i})= (L+1)\epsilon.
    \end{align}
\end{proof}

\section{Greedy Polymatroid Connection}
\label{appendix:submodular-bv}

Here, we describe the connection between greedy multi-path valid verification algorithms (\cref{def:greedy}) and the greedy polymatrid algorithm. We start with \cref{thm:opt-multipath}, but only consider the feasibility condition:
\begin{equation}
    q^\Gamma(a_{1:L}) \leq 1-\left(1-\sum_{a_{1:L} \in T} q(a_{1:L})\right)^K \quad \forall T \subseteq \V^L.
\end{equation}
As previously mentioned, the right hand side is a submodular function $\psi(T)$ in $T$. Note that the inequality must also be an equality at $T=\V^L$ to ensure that $q^\Gamma$ is a probability distribution. Thus, to find \textit{some} feasible $q^\Gamma$ satisfying these submodular constraints, we can use the greedy polymatroid algorithm from \citet{schrijver2003combinatorial}, which aims to maximize the sum of $q^\Gamma$ terms given the above constraints. First, we fix any ordering $\{P_1,\ldots,P_M\}$ on $\V^L$. Then, a feasible solution is
\begin{equation}
    q^\Gamma(P_i) = \psi(\{P_1,\ldots,P_i\}) - \psi(\{P_1,\ldots,P_{i-1}\}).
\end{equation}
This is precisely what \cref{eqn:closed-qgamma} reduces to when the path selection rule $\Gamma$ is induced by the \textit{reversed} global ordering $\{P_M,\ldots,P_1\}$. Thus, greedy multi-path valid verification algorithms are simply outputs of the greedy polymatroid algorithm for the multi-path LP feasibility conditions.

\section{Other Model Families and Datasets}
\label{appendix:qwen-llama}

Here, we expand our original evaluations for OPT-6.7B/350M/125M on GSM8K, HumanEval, and MATH500. We broaden our model family coverage to include model pairs Qwen-3 32B/0.6B and Llama-3 70B/8B, which allows us to test the efficacy of our approach on much larger target models. Due to memory constraints, the Llama-3 experiments were run on a Paperspace machine with 4xA6000 and an Intel Xeon Gold 5315Y CPU (32 cores, 3.20 GHz). We further expand the Qwen-3 dataset coverage to include 500 prompts from MGSM and ToolBench, to measure performance on diverse, non-academic tasks. Our block efficiency results for $L=8$ and temperature 1.0 target sampling are shown in \cref{tab:qwen-block-eff}, and walltime numbers are in \cref{tab:qwen-walltime}.

Interestingly, for Qwen-3, the single-path block verification baseline, i.e. $K=1$, is most effective. On GSM8K, block efficiency is $4.32$ tokens/$M_p$-call at $K=1$, but decreases to $3.91$, $3.57$, and $3.43$ for $K=2,3,4$ respectively, while walltime increases from $79.6$ to $92.8$, $100.5$, and $106.1$ ms/token. Similar trends hold across HumanEval, MATH500, MGSM, and ToolBench. Likewise, for Llama-3, $K=1$ again appears to be the optimal setting. Nevertheless, unlike Qwen-3, there can be meaningful gains at higher $K$. On GSM8K, the walltime decreases by nearly 10\% from $K=3$ to $K=4$.

\begin{table}[htbp]
\centering
\begin{tabular}{@{}llcccc@{}}
\toprule
\textbf{Model pair} & \textbf{Dataset} & \multicolumn{4}{c}{\textbf{Block efficiency} (tokens/$M_p$-call)} \\
\cmidrule(lr){3-6}
& & \textbf{$K=1$} & \textbf{$K=2$} & \textbf{$K=3$} & \textbf{$K=4$} \\
\midrule
\multirow{5}{*}{Qwen-3 32B/0.6B}
& GSM8K      & \textbf{4.322} & 3.910 & 3.571 & 3.429 \\
& HumanEval  & \textbf{4.204} & 3.423 & 3.331 & 2.960 \\
& MATH500    & \textbf{4.342} & 3.917 & 3.693 & 3.411 \\
& MGSM       & \textbf{4.329} & 3.973 & 3.468 & 3.445 \\
& ToolBench  & \textbf{3.090} & 2.706 & 2.791 & 2.681 \\
\midrule
\multirow{2}{*}{Llama-3 70B/8B}
& GSM8K      & \textbf{4.943} & 4.430 & 4.271 & 4.625 \\
& HumanEval  & \textbf{5.993} & 4.806 & 4.473 & 4.370 \\
\bottomrule
\end{tabular}
\caption{We compute the block efficiency (tokens per target model call) for decoding with GBV, for target-draft model pairs Qwen-3 32B/0.6B and Llama-3 70B/8B on up to 500 prompts from each of the datasets GSM8K, HumanEval, and MATH500, MGSM, and ToolBench (only the first two for Llama-3). Larger numbers are better. In almost all settings, efficiency decreases monotonically as $K$ increases from $1$ (block verification baseline) to $4$.}
\label{tab:qwen-block-eff}
\end{table}

\begin{table}[htbp]
\centering
\begin{tabular}{@{}llcccc@{}}
\toprule
\textbf{Model pair} & \textbf{Dataset} & \multicolumn{4}{c}{\textbf{Walltime} (ms/token)} \\
\cmidrule(lr){3-6}
& & \textbf{$K=1$} & \textbf{$K=2$} & \textbf{$K=3$} & \textbf{$K=4$} \\
\midrule
\multirow{5}{*}{Qwen-3 32B/0.6B}
& GSM8K      & \textbf{79.638}  & 92.823  & 100.547 & 106.123 \\
& HumanEval  & \textbf{84.166}  & 106.051 & 109.215 & 124.554 \\
& MATH500    & \textbf{81.517}  & 91.133  & 97.880  & 106.575 \\
& MGSM       & \textbf{80.992}  & 90.312  & 102.296 & 105.387 \\
& ToolBench  & \textbf{113.563} & 132.477 & 128.405 & 133.973 \\
\midrule
\multirow{2}{*}{Llama-3 70B/8B}
& GSM8K      & \textbf{114.123} & 133.188 & 136.272 & 123.545 \\
& HumanEval  & \textbf{96.063} & 124.540 & 131.889 & 135.186 \\
\bottomrule
\end{tabular}
\caption{We compute the walltimes (ms/token) for decoding with GBV, for target-draft model pairs Qwen-3 32B/0.6B and Llama-3 70B/8B on up to 500 prompts from each of the datasets GSM8K, HumanEval, and MATH500, MGSM, and ToolBench (only the first two for Llama-3). Smaller numbers are better. In almost all settings, walltime increases monotonically as $K$ increases from $1$ (block verification baseline) to $4$.}
\label{tab:qwen-walltime}
\end{table}

In our Llama-3 70B/8B experiments, we profiled draft tree size and saw that it grows roughly linearly with $K$, at around $9,16,24,29$ for $K=1,2,3,4$. Across the $K$ sweep, the fraction of walltime spent in draft and target passes remains nearly constant (around 50\% draft and 40\% target), and the peak target and draft KV-cache footprints remain in a narrow band (target is around 80 MB on GSM8K and 110 MB on HumanEval, with the draft being around 40 MB for both). 

This shows that increasing $K$ scales FLOPs and KV traffic in proportion to the larger tree, rather than changing the draft bottleneck. Even if block efficiency is high, end-to-end walltime degrades with $K$. 

For Qwen-3 32B/0.6B, we performed the same profiling with similar results. Again, draft tree size grows approximately linearly with $K$. The peak KV-cache footprint also stays in a narrow band as $K$ increases, with the target ranging from roughly 60 to 120 MB across datasets and the draft from around 30 to 50 MB. Further, draft and target pass times also remain nearly constant relative to the walltime. However, compared to Llama-3, the key difference is that 70\% of time is spent in drafting and only around 20\% is in the batched target forward pass, with only small variations across $K$. 

The target forward pass dominates the cost for Llama-3 significantly more than for Qwen-3. We also observe worse walltimes for the same block efficiencies. For example, Qwen-3 on ToolBench sees essentially the same walltimes for $K=1$ and $K=2$ as Llama-3 on GSM8K, even though Llama-3 has significantly higher block efficiencies than Qwen-3 (around $1.9$ and $1.7$ higher tokens/$M_p$-call for $K=1$ and $K=2$). Differences in hardware setups may additionally explain and influence these relative target and draft costs.

\section{Temperature Ablations}
\label{appendix:temperature}

To measure the impact of temperature on walltime speedups in GBV, we evaluate GBV on Qwen-3 32B/0.6B for GSM8K (500 prompts) with temperatures 0.2, 0.4, 0.6, 0.8, also including the temperature 1.0 run from \cref{tab:qwen-block-eff}. Our results are shown in \cref{tab:qwen-temp-sweep}.

Importantly, we find that GBV performs better at low temperature settings, which directly opposes block verification trends. Whereas standard block verification (the $K=1$ rows) walltimes increase from under 80 ms/token to over 90 ms/token from temperatures 0.2 to 1.0, GBV with $K=3$ sees its walltimes decrease from over 100 ms/token at temperature 1.0 to under 86 ms/token for lower temperatures (0.2, 0.4, 0.6), with its highest throughput at temperature 0.6. These results also show that temperature 1.0 sampling achieves over 10\% lower throughput than all other GBV $K=3$ settings, so our results in \cref{sec:experiments} actually reflect GBV in its worst-performing setting.
 
Even as we noted in \cref{appendix:qwen-llama} and \cref{tab:qwen-block-eff} that standard BV may outperform GBV for Qwen-3 32B/0.6B, this finding no longer holds at low temperatures. For example, GBV with $K=3$ achieves a throughput of around 5.6\% higher tokens/s than BV when the target temperature is 0.2. Even at slightly higher temperatures of 0.4 and 0.6, GBV achieves a higher block efficiency than BV, and achieves similar throughput. Thus, even for model pairs where GBV with $K>1$ does not outperform BV at temperature 1.0 sampling, it may still be preferable over BV at low temperatures.

\begin{table}[htbp]
\centering
\begin{tabular}{@{}cccccc@{}}
\toprule
$K$ & $M_p$\textbf{ Temp.} & \textbf{Block Efficiency} & \textbf{Throughput} & \textbf{Walltime} \\
& & (tokens/$M_p$-call) & (tokens/s) & (ms/token) \\
\midrule
1 & 0.2 & 3.867 & 11.034 & 90.626 \\
1 & 0.4 & 4.313 & 12.407 & 80.598 \\
1 & 0.6 & 4.308 & 12.368 & 80.851 \\
1 & 0.8 & \textbf{4.430} & \textbf{12.761} & \textbf{78.362} \\
1 & 1.0 & 4.322 & 12.557 & 79.638 \\
\midrule
3 & 0.2 & 4.219 & 11.661 & 85.759 \\
3 & 0.4 & 4.333 & 11.911 & 83.954 \\
3 & 0.6 & \textbf{4.492} & \textbf{12.149} & \textbf{82.311} \\
3 & 0.8 & 4.161 & 11.341 & 88.173 \\
3 & 1.0 & 3.571 & 9.946  & 100.547 \\
\bottomrule
\end{tabular}
\caption{We evaluate how temperature impacts the block efficiency, throughput, and walltime for GBV with $K=1,3$ when run on Qwen-3 32B/0.6B with GSM8K (500 prompts) and $L=8$. We use temperature 0.2, 0.4, 0.6, 0.8, and 1.0 sampling from the target model. While BV ($K=1$) performs best at high temperatures (0.8, 1.0), GBV ($K=3$) works better at low temperatures (0.2, 0.4, 0.6).}
\label{tab:qwen-temp-sweep}
\end{table}

\section{Draft Length Ablations}
\label{appendix:L-ablation}

Here, we provide ablation experiments on $L$, the draft block length. We extend our $L=8$ results for Qwen-3 32B/0.6B on GSM8K with temperature 1.0 target sampling to larger draft lengths: $L=16$ and $L=24$. Our results are shown in \cref{tab:qwen-l-sweep}.

As $L$ increases, block efficiency improves only modestly, while throughput degrades rapidly. For example, at $K=1$, the block efficiency increases from 4.32 for $L=8$ to 5.67 and 5.79 for $L=16$ and $L=24$, but the walltime grows from roughly 80 ms/token to 103 ms/token and 140 ms/token. The effect is worse for $K \geq 2$, where moving from $L=8$ to $L=24$ more than doubles the walltime. 

The draft and target costs also shift substantially: the draft time, as a percentage of the end-to-end walltime, grows from about 71\% at $L=8$ to 80\% and 86\% at $L=16,24$, while the target share drops to as low as 10\%. Coupled with our analysis of the draft bottleneck in \cref{appendix:qwen-llama}, this indicates longer blocks increase aggregate cost by pushing more work into the draft bottleneck.

\begin{table}[htbp]
\centering
\begin{tabular}{@{}cccccc@{}}
\toprule
$L$ & $K$ & \textbf{Block Efficiency} & \textbf{Walltime} & \textbf{Draft Cost} & \textbf{Target Cost} \\
& & (tokens/$M_p$-call) & (ms/token)& (\%walltime) & (\%walltime) \\
\midrule
\multirow{4}{*}{8}
& \textbf{1} & \textbf{4.322} & \textbf{79.638}  &\textbf{ 71.5} & \textbf{21.3} \\
  & 2 & 3.910 & 92.823  & 71.3 & 21.0 \\
  & 3 & 3.571 & 100.547 & 71.1 & 21.1 \\
  & 4 & 3.429 & 106.123 & 70.8 & 21.3 \\
\midrule
\multirow{4}{*}{16}
 & \textbf{1} & \textbf{5.667} & \textbf{102.970} &\textbf{ 81.7} & \textbf{13.2} \\
 & 2 & 4.016 & 149.775 & 81.0 & 13.3 \\
 & 3 & 3.915 & 155.406 & 80.5 & 13.6 \\
 & 4 & 3.742 & 163.200 & 80.1 & 13.9 \\
\midrule
\multirow{4}{*}{24}
 & \textbf{1} & \textbf{5.794} & \textbf{140.411} & \textbf{86.0} & \textbf{9.9}  \\
 & 2 & 4.275 & 196.937 & 85.0 & 10.2 \\
 & 3 & 3.945 & 218.631 & 84.3 & 10.8 \\
 & 4 & 3.628 & 240.142 & 83.4 & 11.5 \\
\bottomrule
\end{tabular}
\caption{We evaluate GBV with $1 \leq K \leq 4$ for Qwen-3 32B/0.6B temperature 1.0 sampling on GSM8K (500 prompts), over varying draft block lengths $L=8,16,24$. We report block efficiency, walltime, and draft and target forward pass runtimes as percentages of end-to-end walltime. We find that around $L=8$ is an ideal length for avoiding more pronounced draft bottlenecks.}
\label{tab:qwen-l-sweep}
\end{table}

\section{Multi-Path Comparison}
\label{appendix:multi-path-compare}

To empirically compare GBV to other multi-path methods, we implement six other speculative decoding algorithms in our code: standard speculative decoding \citep{leviathan2023fast,chen2023accelerating}, block verification \citep{sun2024block}, NSS \citep{miao2024specinfer}, SpecInfer \citep{miao2024specinfer}, SpecTr \citep{sun2023spectr}, and traversal verification \citep{weng2025traversal}. SpecInfer provably improves NSS and block verification provably improves standard speculative decoding, so we only benchmark GBV against SpecInfer, SpecTr, and block verification. 

We run comparison experiments for Llama-3 70B/8B on GSM8K (the same 500 prompts as previous experiments) for $L=8$. We test $K=2,3$ as well as temperature $0.2,1.0$ sampling from the target model. Our block efficiency, throughput, and walltime numbers are shown in \cref{tab:method-comparison-gsm8k}. Due to poor preliminary decoding rates and Llama-3 70B memory footprint, we do not run SpecTr and SpecInfer, the worst-performing temperature 1.0 algorithms, at temperature 0.2.

For temperature $1.0$ sampling, GBV and traversal verification perform significantly better than SpecTr and SpecInfer. For both $K=2,3$, GBV achieves over $1.1$ token/s higher throughput than SpecTr and over $1.6$ token/s higher than SpecInfer. While traversal verification does outperform GBV in these settings, coming out on top, the incremental gains are less significant: GBV achieves around $0.6$ fewer token/s for $K=2$ and $0.9$ fewer token/s for $K=3$.

\begin{table}[htbp]
\centering
\begin{tabular}{@{}lccclcc@{}}
\toprule
$M_p$ \textbf{Temp.} & $K$ & \textbf{Method} &
\textbf{Block Efficiency} & \textbf{Throughput} & \textbf{Walltime} \\
& & & (tokens/$M_p$-call) & (tokens/s) & (ms/token) \\
\midrule
\multirow{4}{*}{1.0}
& \multirow{4}{*}{2}
& GBV       & 4.430 &  7.508 & 133.188 \\
&& SpecTr    & 3.716 &  6.355 & 157.365 \\
&& SpecInfer & 3.259 &  5.603 & 178.480 \\
&& \textbf{Traversal} & \textbf{4.760} &  \textbf{8.114} & \textbf{123.245} \\
\midrule
\multirow{4}{*}{1.0}
& \multirow{4}{*}{3}
& GBV       & 4.271 &  7.338 & 136.272 \\
&& SpecTr    & 3.540 &  6.211 & 161.002 \\
&& SpecInfer & 3.239 &  5.686 & 175.879 \\
&& \textbf{Traversal} & \textbf{4.779 }&  \textbf{8.260} & \textbf{121.065} \\
\midrule
\multirow{2}{*}{0.2}
& \multirow{2}{*}{2}
& \textbf{GBV}    & \textbf{6.730}  & \textbf{11.415} & \textbf{87.602} \\
&& Traversal & 5.650 &  9.773 & 102.319  \\
\midrule
\multirow{2}{*}{0.2}
& \multirow{2}{*}{3}
& \textbf{GBV} & \textbf{5.663} &  \textbf{9.775} & \textbf{102.299 }\\
&& Traversal & 5.763  &  9.962  & 100.379  \\
\bottomrule
\end{tabular}
\caption{We compare GBV, SpecTr, SpecInfer, and traversal verification for Llama-3 70B/8B on GSM8K (500 prompts). We report block efficiency and throughput (higher is better), and walltime (lower is better), for $K=2,3$ and temperature $0.2,1.0$ sampling from the target model. While traversal verification outperforms other methods for temperature $1.0$, GBV achieves over $1.5$ tokens/s higher throughput at temperature $0.2$.}
\label{tab:method-comparison-gsm8k}
\end{table}

On the other hand, at lower temperatures like $0.2$, the trend flips: GBV now performs better than traversal verification. While throughput and block efficiency gains are negligible for $K=3$ (only around $0.1$ more tokens/$M_p$-call and $0.2$ higher token/s), the benefits for $K=2$ are significant: GBV achieves with over $1.6$ higher token/s than traversal verification, and over $1.1$ higher tokens/$M_p$-call. In fact, GBV at temperature $0.2$ with $K=2$ achieves the highest throughput by far out of all tested settings: no other method for Llama-3 70B/8B achieves over $10$ tokens/s throughput. This suggests that GBV outperforms traversal verification for low temperatures and sharper $p$ distributions.

\end{document}